\newtheorem{theorem}{Theorem}
\newtheorem{proposition}{Proposition}
\newcommand{\EE}{\mathbb E}
\newcommand{\PP}{\mathbb P}
\newcommand{\1}{\mathds 1}
\newcommand{\Acal}{\mathcal A}
\newcommand{\Ccal}{\mathcal C}
\newcommand{\cond}{\Ccal}
\newcommand{\apK}{\mathrm{apK}}
\DeclareMathOperator{\CI}{\text{CI}}
\newtcolorbox[auto counter,number within=chapter]{algorithm}[2][]{
  enhanced,
%   breakable,
  title={Algorithm \thetcbcounter: #2}
}
\algnewcommand\algorithmicinput{\textbf{Input:}}
\algnewcommand\algorithmicoutput{\textbf{Output:}}
\algnewcommand\Input{\item[\algorithmicinput]}%
\algnewcommand\Output{\item[\algorithmicoutput]}%
\begin{document}
%

% \title{A VACSINE for election auditing:\\
% variance-adaptive confidence sequences\\
% for interpretable and efficient inference}
%
\title{RiLACS: Risk-Limiting Audits via Confidence Sequences}

\author{Ian Waudby-Smith$^1$, Philip B. Stark$^2$, and Aaditya Ramdas$^1$\vspace{0.1in}\\
%   \thanks{Use footnote for providing further information
    % about author (webpage, alternative address)---\emph{not} for acknowledging  funding agencies.} \\
  $^1$Carnegie Mellon University\\
  $^2$University of California, Berkeley \vspace{0.05in}\\
  \texttt{ianws@cmu.edu}, \texttt{stark@stat.berkeley.edu}, \texttt{aramdas@cmu.edu}}

\maketitle

\setcounter{tocdepth}{2}
\makeatletter
\renewcommand\tableofcontents{%
    \@starttoc{toc}%
}

\begin{abstract}
Accurately determining the outcome of an election is a complex task with many potential sources of error, ranging from software glitches in voting machines to procedural lapses to outright fraud. 
Risk-limiting audits (RLA) are statistically principled ``incremental'' hand counts that provide statistical assurance that reported outcomes accurately reflect the validly cast votes.  
We present a suite of tools for conducting RLAs using confidence sequences --- sequences of confidence sets which uniformly capture an electoral parameter of interest from the start of an audit to the point of an exhaustive recount with high probability. Adopting the SHANGRLA \cite{stark2019sets} framework, we design nonnegative martingales which yield computationally and statistically efficient confidence sequences and RLAs for a wide variety of election types.
\end{abstract}

\tableofcontents

\section{Introduction}
\label{section:intro}
The reported outcome of an election may not match the validly cast votes for a variety of reasons, including 
software configuration errors, bugs, human error,
and deliberate malfeasance.
Trustworthy elections start with a trustworthy
paper record of the validly cast votes.
Given access to a trustworthy paper trail of votes,
a risk-limiting audit (RLA) can 
provide a rigorous probabilistic guarantee:
\begin{enumerate}
    \item If an initially announced assertion $\mathcal{A}$ about an election is \emph{false}, this will be corrected by the audit with high probability;
    \item If the aforementioned assertion $\mathcal{A}$ is \emph{true}, then $\Acal$ will be confirmed (with probability one).
\end{enumerate}

% One can view an election as consisting of three related yet distinct parts:
% \begin{itemize}
%     \item \underline{How voters express their preferences:} for example, a voter may be allowed to (A) indicate only their most preferred candidate (like in the U.S. elections), or (B) approve and/or disapprove of several candidates (\emph{approval} voting), or (C) specify a fully or partially ranked list (\emph{ranked-choice} voting) or (D) submit a numerical score for each candidate in a specified range (\emph{score} voting).
%     \item \underline{How votes are aggregated:} this of course depends on how the preferences were expressed. For (A,D) we would simply count the votes (or numerical scores) for each candidate, for (B) we must count both approving and disapproving votes. For (C), we must specify a rule to aggregate ranks such as Borda count.
%     \item \underline{How the election is decided:} is there one winner or multiple winners? For (A), (C) with Borda count and (D), perhaps only the candidate with the largest total wins; in setting (A), this is called a \emph{plurality} or \emph{first-past-the-post} election. For (B), we must decide how many disapprovals cancel out an approval. For an election to a committee or a board, we may decide that any candidate above a prespecified threshold would qualify. 
% \end{itemize}
Here, an electoral assertion $\mathcal{A}$ is simply a claim about the aggregated votes cast (e.g. ``Alice received more votes than Bob'').
An auditor may wish to audit several claims: for example, whether the reported winner is correct or whether the margin of victory is as large as announced.

From a statistical point of view, efficient risk-limiting audits 
can be implemented as sequential hypothesis tests. 
Namely, one tests the null hypothesis $H_0$: ``the assertion $\mathcal{A}$ is false,'' versus the alternative $H_1$: ``the assertion $\mathcal{A}$ is true''. 
Imagine then observing a random sequence of voter-cast ballots $X_1, X_2, \dots, X_N$, where $N$ is the total number of ballots.
A sequential hypothesis test is represented by a sequence $(\phi_t)_{t=1}^N$ of binary-valued functions:
\[ \phi_t := \phi(X_1, \dots, X_t) \mapsto \{ 0, 1\}, \]
where $\phi_t = 1$ represents rejecting $H_0$ (typically in favor of $H_1$), and $\phi_t = 0$ means that $H_0$ has not yet been rejected. The sequential test (and thus the RLA) stops as soon as $\phi_t = 1$ or once all $N$ ballots are observed, whichever comes first. 
The ``risk-limiting'' property of RLAs states that if the assertion is false (in other words, if $H_0$ holds), then
\[ \PP_{H_0}\left (\exists t \in \{1, \dots, N\} : \phi_t = 1 \right ) \leq \alpha, \]
which is equivalent to type-I error control of the sequential test.
Another way of interpreting the above statement is as follows:
if the assertion is incorrect, then with probability at least $(1-\alpha)$, $\phi_t = 0$ for every $t \in \{1, \dots, N\}$ and hence all $N$ ballots will eventually be inspected, at which point the ``true'' outcome (which is the result of the full hand count) will be known with certainty.

\subsection{SHANGRLA Reduces Election Auditing to Sequential Testing}

Designing the sequential hypothesis test $(\phi_t)_{t=1}^N$ depends on the type of vote, the aggregation method, or the social choice function for the election, and thus past works have constructed a variety of tests. 
Some works have designed $(\phi_t)_{t=1}^N$ in the context of a particular type of election \cite{lindeman2012bravo,ottoboni2019bernoulli,rivest2017clipaudit}. On the other hand, the ``SHANGRLA'' (\textbf{S}ets of \textbf{H}alf-\textbf{A}verage \textbf{N}ulls \textbf{G}enerate \textbf{RLA}s) framework unifies many common election types including plurality elections, approval voting, ranked-choice voting, and more by reducing each of these to a simple hypothesis test of whether a finite collection of finite lists of bounded numbers has mean $\mu^\star$ at most 1/2 \cite{stark2019sets,blom2021}. 
Let us give an illustrative example to show how SHANGRLA can be used in practice. 

Suppose we have an election with two candidates, Alice and Bob.
A ballot may contain a vote for Alice or for Bob, or it may contain no valid vote, e.g., because there was no selection or an overvote.
It is reported that Alice and Bob received $N_A$ and $N_B$ votes respectively with $N_A > N_B$ and that there were a total of $N_I$ invalid ballots for a total of $N = N_A + N_B + N_I$ voters. 
We encode votes for Alice as ``1'', votes for Bob as ``0'' and invalid votes as ``1/2'', to obtain a set of numbers $\{x_1, x_2, \dots, x_N\}$. 
Crucially, Alice indeed received more votes than Bob if and only if $\mu^\star := \frac{1}{N} \sum_{i=1}^N x_i > 1/2$. 
In other words, \emph{the report that Alice beat Bob can be translated into the assertion that $\mu^* \in (1/2,1]$}. 

SHANGRLA proposes to audit an assertion by testing its complement:
rejecting that ``complementary null'' is affirmative evidence that
the assertion is indeed true.
In other words, if one can ensure that $X_1, X_2, \dots, X_N$ is a random permutation of $\{x_1, \dots, x_N\}$ by sampling ballots without replacement (each ballot is chosen uniformly amongst remaining ballots), then we can concern ourselves with designing a hypothesis test $(\phi_t)_{t=1}^N$ to test the null ${H_0 : \mu^\star \leq 1/2}$ against the alternative ${H_1 : \mu^\star > 1/2}$.

One of the major benefits of SHANGRLA is the ability to reduce a wide range of election types to a testing problem of the above form. This permits the use of powerful statistical techniques which were designed specifically for such testing problems (but may not have been designed with RLAs in mind). Throughout this paper, we adopt the SHANGRLA framework, and while we return to the example of plurality elections for illustrative purposes, all of our methods can be applied to any election audit which has a SHANGRLA-like testing reduction \cite{stark2019sets}. 
%Moreover, any sequential hypothesis testing problem can be dually formulated as a sequential estimation problem. These take the form of \textit{confidence sequences} which we discuss in the following subsection.

\subsection{Confidence Sequences}
\label{section:introCS}
In the fixed-time (i.e. non-sequential) hypothesis testing regime, there is a well-known duality between hypothesis tests and confidence intervals for a parameter $\mu^\star$ of interest. 
We describe this briefly for $\mu^\star \in [0,1]$ for simplicity. For each $\mu \in [0,1]$, suppose that
\(
\phi^\mu \equiv \phi^\mu(X_1, \dots, X_n) \mapsto \{0, 1\} 
\)
is a level-$\alpha$ nonsequential, fixed-sample test for the hypothesis $H_0: \mu^\star = \mu$ versus $H_1: \mu^\star \neq \mu$. Then, a nonsequential, fixed-sample $(1-\alpha)$ confidence interval for $\mu^\star$ is given by the set of all $\mu \in [0,1]$ for which $\phi^\mu$ does not reject, that is
\( \{\mu \in [0,1] : \phi^\mu = 0 \} . \)

As we discuss further in Section~\ref{section:CS}, an analogous duality holds for sequential hypothesis tests and time-uniform \emph{confidence sequences} (here and throughout the paper, ``time'' is used to refer to the number of samples so far, and need not correspond to any particular units such as hours or seconds). We first give a brief preview of the results to come. 
Consider a family of sequential hypothesis tests $\{(\phi_t^\mu)_{t=1}^N\}_{\mu \in [0,1]}$, meaning that for each $\mu$, $(\phi_t^\mu)_{t=1}^N$ is a sequential test for $\mu$. 
Then, the set of all $\mu$ for which $\phi_t^\mu = 0$,
\[ C_t := \{ \mu \in [0, 1] : \phi_t^\mu = 0 \} \]
forms a $(1-\alpha)$ \emph{confidence sequence} for $\mu^\star$, meaning that
\[ \PP(\exists t \in [N] : \mu^\star \notin C_t ) \leq \alpha, \]
where $[N]$ is used to denote the set $\{1, 2, \dots, N \}$. In other words, $C_t$ will cover $\mu^\star$ at \emph{every single} time $t$, except with some small probability $\leq \alpha$. 
Since $C_t$ is typically an interval $[L_t, U_t]$, we call the lower endpoint $(L_t)_{t=1}^N$ as a lower confidence sequence (and similarly for upper).

\begin{figure}[!htbp]
    \centering
    \includegraphics[width=0.7\textwidth]{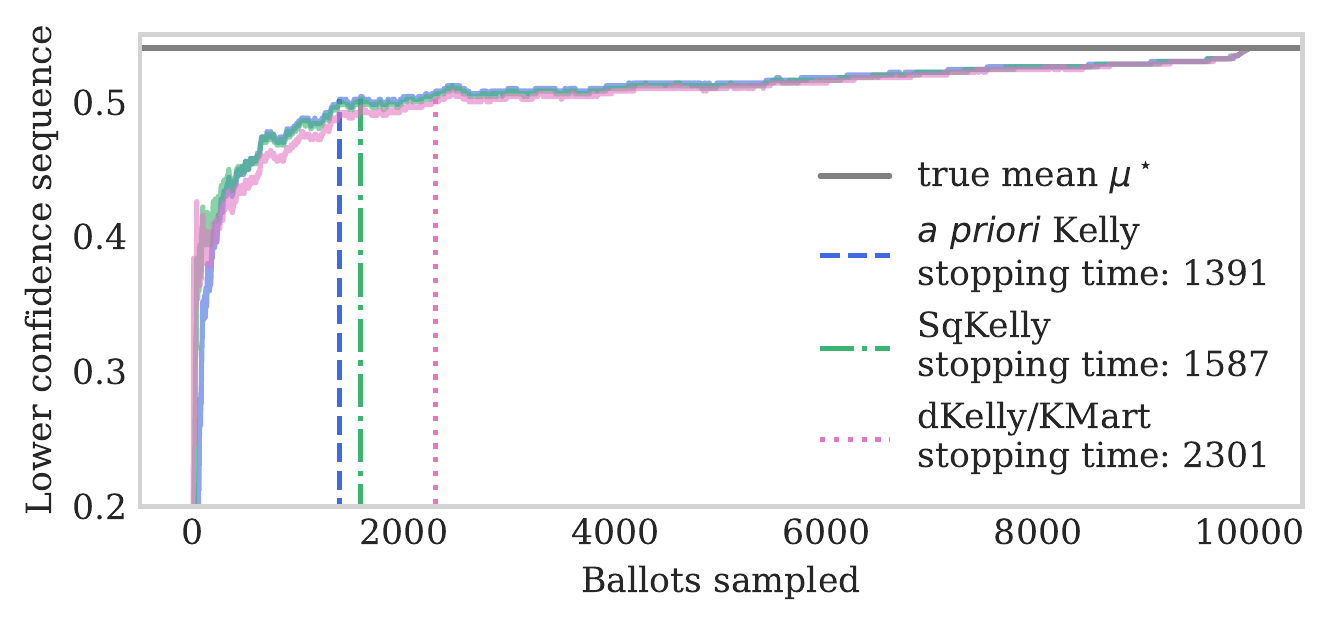}
    \caption{95\% Lower confidence sequences for the margin of a plurality election between Alice and Bob for three different auditing methods. Votes for Alice are encoded by ``1'' and those for Bob are encoded by ``0''. The parameter of interest is then the average of these votes, which in this particular example is 54\% (given by the horizontal grey line). The outcome is verified once the lower confidence sequence exceeds 1/2. The time at which this happens is given by the vertical blue, green, and pink lines.}
    \label{fig:exampleCS}
\end{figure}

In particular, given the sequential hypothesis testing problem that arises in SHANGRLA, we can cast the RLA as a sequential estimation problem that can be solved by developing confidence sequences (see Figure~\ref{fig:exampleCS}).\footnote{Code to reproduce all plots can be found at \href{https://github.com/wannabesmith/RiLACS}{github.com/wannabesmith/RiLACS}.} As we will see in Section~\ref{section:CS}, our confidence sequences provide added flexibility and an intuitive visualizable interpretation for SHANGRLA-compatible election audits, without sacrificing any statistical efficiency.

\subsection{Contributions and Outline} 
The contributions of this work are twofold. First, we introduce confidence sequences to the election auditing literature as intuitive and flexible ways of interpreting and visualizing risk-limiting audits. Second, we present  algorithms for performing  RLAs based on confidence sequences by deriving statistically and computationally efficient nonnegative martingales. 
At the risk of oversimplifying the issue, modern RLAs face a computational-statistical efficiency tradeoff. Methods such as BRAVO are easy to compute, but potentially less statistically efficient than the current state-of-the-art, KMart \cite{stark2019sets}, but KMart can be prohibitively expensive to compute for large elections. The methods presented in this paper resolve this tradeoff: they typically match or outperform both BRAVO and KMart, while remaining practical to compute in large elections.
% There is nothing lost in this perspective: confidence sequences can also be used as hypothesis tests, yielding RLAs which generalize and often outperform the current state-of-the-art.

In Section~\ref{section:CS}, we show how confidence sequences generate risk-limiting audits, how they relate to more familiar RLAs based on sequentially valid $p$-values, and how they can be used to audit multiple contests. 
Section~\ref{section:design} derives novel confidence sequence-based RLAs and compares them to past RLA methods via simulation. 
% Comment out for evoteid
In Section~\ref{section:canada}, we illustrate how the previously derived techniques can be applied to an audit of Canada's 43rd federal election. 
Finally, Section~\ref{section:tallies} discusses how all of the aforementioned results apply to risk-limiting tallies for coercion-resistant voting schemes.

\section{Confidence Sequences are Risk-Limiting}
\label{section:CS}
Consider an election consisting of $N$ ballots. Following SHANGRLA \cite{stark2019sets}, suppose that these can be transformed to a set of $[0, u]$-bounded real numbers $x_1, \dots, x_N \in [0, u]$ with mean $\mu^\star := \frac{1}{N}\sum_{i=1}^N x_i$ for some known $u > 0$. Suppose that electoral assertions can be made purely in terms of $\mu^\star$. 
% As a concrete example, consider ballots consisting of exactly two candidates but some ballots are invalid (e.g. the voter did not vote for either candidate, or the ballot is illegible, etc.). Then we can encode ballots as follows:
% \[ x_i := \begin{cases}
% 1 & \text{if ballot $i$ is a vote for candidate 1} \\
% 0 & \text{if ballot $i$ is a vote for candidate 0} \\
% 1/2 & \text{if ballot $i$ is invalid}.
% \end{cases}\]
% With this encoding for each ballot, we have that the assertion ``candidate 1 beat candidate 2'' is equivalent to saying $\mu^\star := \frac{1}{N} \sum_{i=1}^N x_i > 1/2$. Other assertions 
A classical $(1-\alpha)$ confidence interval $\CI_n$ for $\mu^\star$ is an interval computed from data $X_1, X_2, \dots, X_n$ with the guarantee that
\[ \forall n \in [N],\ \PP(\mu^\star \in \CI_n) \geq 1-\alpha. \]
In contrast, a $(1-\alpha)$ \emph{confidence sequence} for $\mu^\star$ is a sequence of confidence sets, $C_1, C_2, \dots, C_N$ which all simultaneously capture $\mu^\star$ with probability at least $(1-\alpha)$. That is,
\[ \underbrace{\PP(\forall t \in [N],\ \mu^\star \in C_t) \geq 1-\alpha}_{\text{simultaneous coverage probability}}, \quad \text{ or equivalently } \quad \underbrace{\PP(\exists t \in [N] : \mu^\star \notin C_t) \leq \alpha}_{\text{error probability}}. \]
The two probabilistic statements above are equivalent, but provide a different way of interpreting $\alpha$ and the corresponding guarantee.

If we have access to a $(1-\alpha)$ confidence sequence for $\mu^\star$, we can audit any assertion about the election outcome made in terms of $\mu^\star$ with risk limit $\alpha$. Here, we use $\Acal \subseteq [0, u]$ to denote an assertion. For example, SHANGRLA typically uses assertions of the form ``$\mu^\star$ is greater than $1/2$'', in which case $\Acal = (1/2, u]$.
\begin{algorithm}{Risk limiting audits via confidence sequences (RiLACS)}
\begin{algorithmic}[] % The number tells where the line numbering should start
    \Input Assertion $\Acal \subseteq [0, u]$, risk limit $\alpha \in (0, 1)$.%\\\hspace{0.7cm}function $C : \bigcup_{t \in [N]}[0, 1]^t \to 2^{[0, 1]}$.
    \For{$t \in [N]$}
        \State Randomly sample and remove $X_t$ from the remaining ballots.
        \State Compute $C_t \equiv C(X_1, \dots, X_t)$ at level $\alpha$.
            \If{$\Acal \subseteq C_t$}
            \State Certify the assertion $\Acal$ and stop if desired.
        \EndIf
    \EndFor
\end{algorithmic}
\end{algorithm}
If the goal is to finish the audit as soon as possible above all else, then one can ignore the ``if desired'' condition. However, continued sampling can provide added assurance in $\Acal$, and maintains the risk limit at $\alpha$. The following theorem summarizes the risk-limiting guarantee of the above algorithm.
\begin{theorem}
\label{theorem:cs_rla}
    Let $(C_t)_{t=1}^N$ be a $(1-\alpha)$ confidence sequence for $\mu^\star$. Let $\Acal \subseteq [0, u]$ be an assertion about the electoral outcome (in terms of $\mu^\star$). The audit mechanism that certifies $\Acal$  as soon as $C_t \subseteq \Acal$ has risk limit $\alpha$.
\end{theorem}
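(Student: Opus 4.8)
The plan is to verify the risk-limiting property in the sense introduced earlier: when the assertion $\Acal$ is \emph{false}, i.e.\ $\mu^\star \notin \Acal$, the probability that the audit ever certifies $\Acal$ must be at most $\alpha$. Since the audit certifies precisely when some $C_t \subseteq \Acal$, the goal is to bound $\PP(\exists t \in [N] : C_t \subseteq \Acal)$ under the hypothesis $\mu^\star \notin \Acal$.

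The crux is a short set-theoretic observation relating a false certification to a coverage failure of the confidence sequence. Suppose $\mu^\star \notin \Acal$ and that at some time $t$ we have $C_t \subseteq \Acal$. Then $\mu^\star \notin C_t$: otherwise $\mu^\star \in C_t \subseteq \Acal$ would force $\mu^\star \in \Acal$, contradicting the hypothesis. Hence, on the event $\{\mu^\star \notin \Acal\}$, every time at which the audit erroneously certifies is a time at which $C_t$ fails to cover $\mu^\star$, giving the event inclusion $\{\exists t : C_t \subseteq \Acal\} \subseteq \{\exists t : \mu^\star \notin C_t\}$.

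From here I would simply apply monotonicity of probability to this inclusion and then invoke the defining property of a $(1-\alpha)$ confidence sequence, namely $\PP(\exists t \in [N] : \mu^\star \notin C_t) \leq \alpha$. This yields $\PP(\exists t \in [N]: C_t \subseteq \Acal) \leq \alpha$ for every $\mu^\star \notin \Acal$, which is exactly the claimed risk limit.

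There is no substantive obstacle here --- the argument is a one-line containment followed by the confidence-sequence coverage bound. The only point worth care is the direction of the certification rule: the theorem certifies when $C_t \subseteq \Acal$ (the confidence set lies entirely inside the assertion region, so that covering $\mu^\star$ implies $\mu^\star \in \Acal$), which is precisely the direction that makes the containment above valid. I would double-check that this matches the rule intended in the RiLACS algorithm, since the proof hinges on using $C_t \subseteq \Acal$ rather than the reverse inclusion.
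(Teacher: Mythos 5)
Your proposal is correct and follows essentially the same route as the paper: the key observation that on $\{\mu^\star \notin \Acal\}$ the event $\{C_t \subseteq \Acal\}$ forces $\mu^\star \notin C_t$, followed by monotonicity of probability and the defining coverage property of the confidence sequence. Your closing remark about the direction of the inclusion $C_t \subseteq \Acal$ is also consistent with the certification rule in the RiLACS algorithm as stated.
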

\begin{proof}
    We need to prove that if $\mu^\star \notin \Acal$, then $\PP(\exists t \in [N] : C_t \subseteq \Acal) \leq \alpha$. First, notice that if $C_t \subseteq \Acal$, then we must have that $\mu^\star \notin C_t$ since $\mu^\star \notin \Acal$. Then,
    \begin{align*}
        \PP(\exists t \in [N] : C_t \subseteq \Acal) &\leq \PP(\exists t \in [N] : \mu^\star \notin C_t) \\
        &\leq \alpha,
    \end{align*}
    where the second inequality follows from the definition of a confidence sequence. This completes the proof. 
\end{proof}

Let us see how this theorem can be used in an example. Consider an election with two candidates, Alice and Bob, and a total of $N$ 
cast ballots. 
Let $\{x_1, \dots, x_N\}$ be the list of numbers that result from encoding votes for Alice as 1, votes for Bob as 0, and ballots that do not
contain a valid vote as $1/2$. 
Let $(C_t)_{t=1}^N$ be a $(1-\alpha)$ confidence sequence for $\mu^\star := \frac{1}{N} \sum_{i=1}^N x_i$. 
If we wish to audit the assertion that ``Alice beat Bob'', then $u=1$
and $\Acal = (1/2, 1]$. 
We can sequentially sample $X_1, X_2, \dots, X_N$ without replacement, certifying the assertion once $C_t \subseteq \Acal$. By Theorem~\ref{theorem:cs_rla}, this limits the risk to level $\alpha$.

\subsection{Relationship to Sequential Hypothesis Testing}
\label{section:relationshipToTesting}
The earliest work on RLAs did not use anytime $p$-values \cite{stark2008conservative,stark2009cast}, 
but
since about 2009, most RLA methods have used anytime $p$-values to conduct sequential hypothesis tests
\cite{stark2009risk,ottoboni2018risk,ottoboni2019bernoulli,stark2019sets,huang2020unified}. 
An anytime $p$-value is a sequence of $p$-values $(p_t)_{t=1}^N$ with the property that under some null hypothesis $H_0$,
\begin{equation}
\label{eq:anytime_pval}
     \PP_{H_0} ( \exists t \in [N] : p_t \leq \alpha ) \leq \alpha. 
\end{equation}
The anytime $p$-values $p_t \equiv p_t(\mu)$ are typically defined implicitly for each null hypothesis $H_0 : \mu^\star = \mu$ and yield a sequential hypothesis test $\phi_t^\mu := \1(p_t(\mu) \leq \alpha)$. 
As alluded to in Section~\ref{section:introCS}, this immediately recovers a confidence sequence:
\[ C_t := \{ \mu \in [0, u] : \phi_t^\mu = 0 \}. \]
Notice in Figure~\ref{fig:CSvsPval_ballots} that the times at which nulls are rejected (or ``stopping times'') are the same for both confidence sequences and the associated $p$-values. 
Thus, nothing is lost by basing
the RLA on confidence sequences rather than anytime $p$-values. 
Confidence sequences benefit from being visually intuitive and are arguably easier to interpret than anytime $p$-values.

For example, consider conducting an RLA for a simple two-candidate election between Alice and Bob with no invalid votes. 
Suppose that it is reported that Alice won, i.e., $\mu^\star := \frac{1}{N} \sum_{i=1}^N x_i > 1/2$ where $x_i = 1$ if the $i$th ballot is for Alice, 0 if for Bob, and $1/2$ if the ballot does not contain a valid vote for either candidate. 
A sequential RLA in the SHANGRLA framework would posit a null hypothesis $H_0 : \mu^\star \leq 1/2 $ (the complement of the announced result: Bob actually won or the outcome is a tie), sample random ballots sequentially, and stop the audit (confirming the announced result) if and when $H_0$ is rejected at significance level $\alpha$.
If $H_0$ is not rejected before all ballots have been inspected, the true outcome is known.\footnote{%
At any point during the sampling, an election
official can choose to abort the sampling and perform a full hand count for any reason. This cannot
increase the risk limit: the chance of failing to correct an incorrect reported outcome does not increase.
}

\begin{figure}[!htbp]
    \centering
    \includegraphics[width=0.95\textwidth]{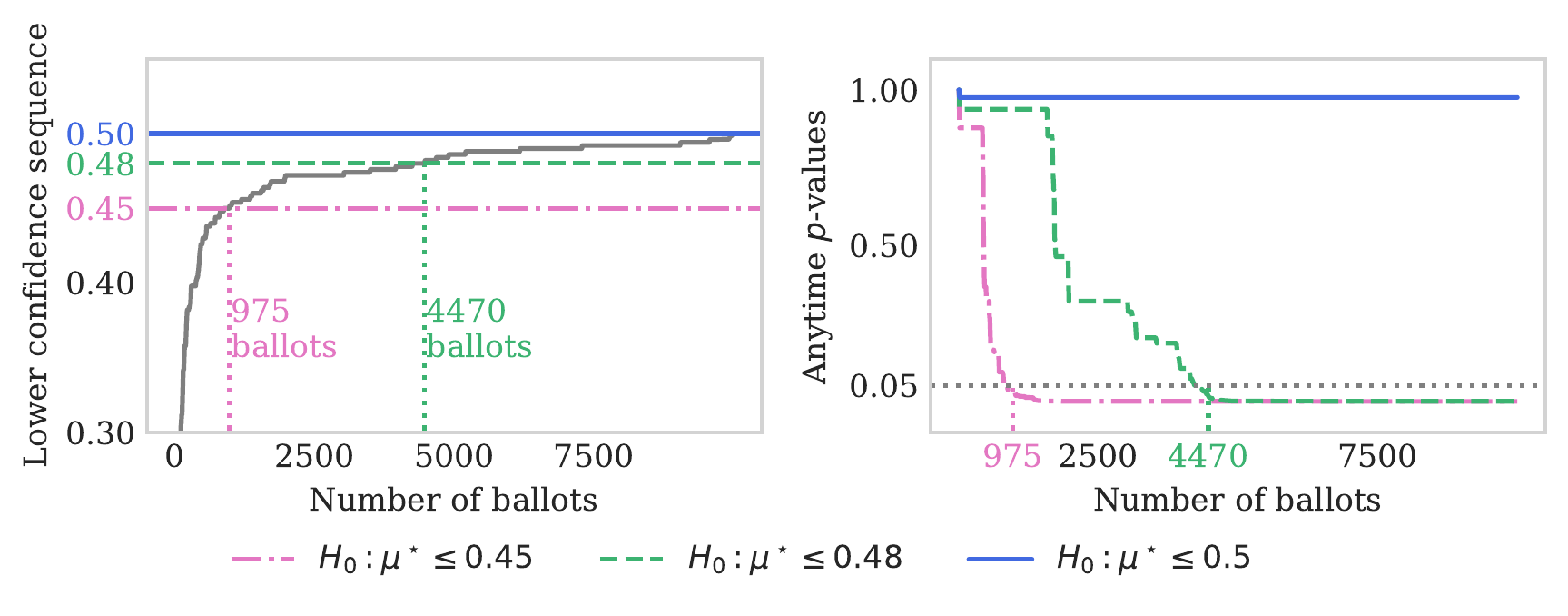}
    \caption{The duality between anytime $p$-values and confidence sequences for three nulls: $H_0: \mu^\star \leq \mu_0$ for $\mu_0 \in \{0.45, 0.48, 0.5\}$. The $p$-value for $H_0: \mu^\star \leq 0.45$ (pink dash-dotted line) drops below $5\%$ after 975 samples, exactly when the $95\%$ \emph{lower} confidence sequence exceeds 0.45. However, the $p$-value for $H_0 : \mu^\star \leq 0.5$ never reaches $0.05$ and the 95\% confidence sequence never excludes 0.5, the true value of $\mu^\star$.}
    \label{fig:CSvsPval_ballots}
\end{figure}

On the other hand, a ballot-polling RLA 
\cite{lindeman2012bravo} based on confidence sequences proceeds by computing a lower $1-\alpha$
confidence bound for the fraction $\mu^\star$ of votes for Alice.
The audit stops, confirming the outcome, if and when this lower bound is larger than 1/2.
If that does not occur before the last ballot has been examined, the true outcome is known.
In this formulation, there is no need to define a null hypothesis as the complement of the announced result and interpret the resulting $p$-value, and so on. 
The approach also works for comparison audits using the ``overstatement assorter'' approach developed in \cite{stark2019sets}, which
transforms the problem into the same canonical form: testing whether the mean of any list in a collection of nonnegative, bounded lists is less than 1/2.
% However, whether one interpretation is more intuitive than another is a matter of personal preference. 

\subsection{Auditing Multiple Contests}
\label{section:multipleContests}
It is known that RLAs of multi-candidate, multi-winner elections can be reduced to several pairwise contests without adjusting for multiplicity \cite{lindeman2012bravo}. This is accomplished by testing whether every single reported winner beat every single reported loser, and stopping once each of these tests rejects their respective nulls at level $\alpha \in (0, 1)$. For example, suppose it is reported that a set of candidates $\mathcal{W}$
beat
a set of candidates $\mathcal{L}$
in a $k$-winner plurality contest
with $K$ candidates in all
(that is, $|\mathcal{W}| = k$
and $|\mathcal{L}| = K-k$).
For each reported winner $w \in \mathcal{W}$
and each reported loser $\ell \in \mathcal{L}$,
encode votes for candidate $w$ as ``1'', votes for $\ell$ as ``0'' and
ballots with no valid vote in the contest or with a vote for any other candidate as ``1/2'' to obtain the population $\{x_{1}^{w, \ell}, \dots, x_{N}^{w, \ell}\}$. 
Then as before, candidate $w$ beat candidate $\ell$ if and only if $\mu^\star_{w,\ell} := \frac{1}{N} \sum_{i=1}^N x_{i}^{w, \ell} > 1/2$. 
In a two-candidate plurality election we would have proceeded by testing the null $H_{0}^{w, \ell} : \mu^\star_{w,\ell} \leq 1/2$ against the alternative $H_{1}^{w, \ell} : \mu^\star_{w,\ell} > 1/2$. 
% \textcolor{red}{I hope i did an alright job with fixing this!}
% \todo{This is slightly garbled: one issue is treating a multi-candidate contest as a collection of 2-candidate contests.
% A separate issue is auditing any number of different contests at the same time from the same sample.
% The description below didn't quite match SSSSB, which actually used the maximum overstatement across candidates and contests.
% The first paper to construct an audit that explicitly decomposes a plurality contest into pairwise contests audited separately is BRAVO.
% The earlier work just took the \emph{maximum relative overstatement} across 
% candidates and across contests, rather than tracking each pair separately.}
To use the decomposition of a 
single winner or multi-winner plurality contest into a set of pairwise contests,
we test each null $H_{0}^{w, \ell} : \mu^\star_{w,\ell} \leq 1/2$ for 
$w \in \mathcal{W}$
and $\ell \in \mathcal{L}$.
The audit stops if and when \emph{all $k(K-k)$ null hypotheses} are rejected. 
Crucially, if candidate $w \in \mathcal{W}$ did not win (i.e. $\mu^\star_{w,\ell} \leq 1/2$ for some $\ell \in \mathcal{L}$), then
\[ \PP(\text{reject all } H_{0, w, \ell}: w \in \mathcal{W}, \ell \in \mathcal{L}) \leq \min_{w \in \mathcal{W}, \ell \in \mathcal{L}} \PP(\text{reject } H_{0, w, \ell}) \leq \alpha. \]
The same technique applies when auditing with confidence sequences.
% Suppose that the candidates $\mathcal{W}_j$ reportedly won contest $j$ and the candidates $\mathcal{L}_j$ reportedly lost contest $j$.
% \todo{the following needs to be re-written}
% Suppose we wish to audit $J$ contests.
Let $\{(C_{t}^{w, \ell})_{t=1}^N\}$
be $(1-\alpha)$ confidence sequences for
$\{\mu_{w, \ell}^*\}$,
% $j=1, \ldots, J$,
$w \in \mathcal{W}$, $\ell \in \mathcal{L}$.
We verify the electoral outcome of every contest once $C_{t}^{w, \ell} \subseteq (1/2, u] $
for all
% all $j \in \{1, \ldots, J\}$,
$w \in \mathcal{W}$, $\ell \in \mathcal{L}$.
Again, if $\mu_{w,\ell}^\star \leq 1/2$
% for some $j$,
for some $w \in \mathcal{W}$, and
$\ell \in \mathcal{L}$,
then
\begin{align*}
    &\PP(\forall w \in \mathcal{W}, \forall \ell \in \mathcal{L},\ C_{t} ^{w, \ell} \subseteq (1/2, u]) \leq \min_{w \in \mathcal{W}, \ell \in \mathcal{L} }
\PP(C_{t}^{w, \ell} \subseteq (1/2, u]) \leq \alpha. 
\end{align*}
This technique can be generalized to handle audits of any number of contests from the same audit sample, as explained in \cite{stark2019sets}. For the sake of brevity, we omit the derivation, but it is a straightforward extension of the above.
%The similarities between properties of tests and confidence sequences are not coincidences. Underlying all of these guarantees is a duality between confidence sequences, anytime $p$-values, and sequential hypothesis tests, which we make explicit in the following section.

% \paragraph{Auditing multiple assertions} One simple but powerful advantage of using confidence sequences as in Theorem~\ref{theorem:cs_rla} is the flexibility to audit multiple assertions simultaneously. For example, suppose it is not only of interest to 
% \com{come back to this. Notice that we can't actually estimate the margin here because we don't know how many invalid votes there are.}
% \com{Idea: change the population by the margin being tested, and then invert the CS?}

\section{Designing Powerful Confidence Sequences for RLAs}
\label{section:design}

So far we have discussed how to conduct RLAs from confidence sequences for the parameter $\mu^\star$. 
In this section, we will discuss how to derive powerful confidence sequences for the purposes of conducting RLAs as efficiently as possible. For mathematical and notational convenience in the following derivations, we consider the case where $u=1$. Note that nothing is lost in this setup since any population of $[0, u]$-bounded numbers can be scaled to the unit interval $[0, 1]$ by dividing each element by $u$ (thereby scaling the population's mean as well).

As discussed in Section~\ref{section:relationshipToTesting}, we can construct confidence sequences by ``inverting'' sequential hypothesis tests. In particular, given a sequential hypothesis test $(\phi_t^\mu)_{t=1}^N$, the sequence of sets,
\[ C_t := \{ \mu \in [0, 1] : \phi^\mu_t = 0 \} \]
forms a $(1-\alpha)$ confidence sequence for $\mu^\star$.
Consequently, in order to develop powerful RLAs via confidence sequences, we can simply focus on carefully designing sequential tests $(\phi_t^\mu)_{t=1}^N$.\footnote{Notice that it is not always feasible to compute the set of all $\mu \in [0, 1]$ such that $\phi_t^\mu = 0$ since $[0, 1]$ is uncountably infinite. However, all  confidence sequences we will derive in this section are intervals (i.e. convex), and thus we can find the endpoints using a simple grid search or standard root-finding algorithms.}
% Analogous to the duality between hypothesis tests and confidence intervals, we can construct confidence sequences by ``inverting'' sequential hypothesis tests. 
% This fact is well-known in the confidence sequences literature but we review it in the following proposition for completeness.
% \begin{proposition}
% \label{proposition:cs_test_duality}
% Let $(\phi_t^{\mu^\star})_{t=1}^N$ be a level-$\alpha$ sequential hypothesis test, that is
% \begin{equation}
% \label{eq:validSeqTest}
% \PP\left ( \exists t \in [N] : \phi_t^{\mu^\star} = 1 \right ) \leq \alpha. 
% \end{equation}
% Then the sequence of sets
% \[ C_t := \{ \mu \in [0, 1] : \phi^\mu_t = 0 \} \]
% forms a $(1-\alpha)$ confidence sequence for $\mu^\star$.
% \end{proposition}
% \label{proposition:csFromTest}
% \begin{proof}
%     It suffices to prove that $\PP(\exists t \in [N] : \mu^\star \notin C_t) \leq \alpha$. To this end,
%     \begin{align*}
%         \PP \left (\exists t \in [N]: \mu^\star \notin C_t \right) \leq \alpha) &= \PP \left ( \exists t \in [N] : \phi^{\mu^\star}_t \neq 0 \right )\\
%         &= \PP \left ( \exists t\in [N]: \phi^{\mu^\star}_t = 1 \right ) 
%         \quad \leq \alpha,
%     \end{align*}
%     where the final inequality follows from \eqref{eq:validSeqTest}. This completes the proof. \qed
% \end{proof}

%\subsection{Martingales}
To design sequential hypothesis tests, we start by finding \emph{martingales} that translate to powerful tests. To this end, define $M_0(\mu) := 1$ and consider the following process for $t \in [N]$:
\begin{equation}
\label{eq:capitalProcess}
    M_t(\mu) := \prod_{i=1}^t \left (1 + \lambda_i (X_i - \cond_i(\mu)) \right ),
\end{equation}
where $\lambda_i \in \left [0, \tfrac1{\cond_i(\mu)} \right]$ is a tuning parameter depending only on $X_1, \dots, X_{i-1}$, and 
\[ \cond_i(\mu) := \frac{N \mu - \sum_{j=1}^{i-1} X_j}{N - i + 1} \]
is the conditional mean of $X_i \mid X_1, \dots, X_{i-1}$ if the mean of $\{x_1, \dots, x_N\}$ were $\mu$. 

%Processes of the form \eqref{eq:capitalProcess} are often referred to as \textit{capital} or \textit{wealth} processes as they have a gambling interpretation in the game-theoretic statistical inference literature \cite{waudby2020estimating, shafer2011test, shafer2019game}. In particular, each factor $(1 + \lambda_t(X_t - \cond(\mu))$ can be interpreted as a factor by which one's capital is being multiplied at each time step $t$. In this interpretation, $(\lambda_t)_{t=1}^N$ can be seen as a sequence of bets. For a more comprehensive treatment of this interpretation, we direct the reader to \citet{waudby2020estimating, shafer2011test}, shafer2019game}.
Following \cite[Section 6]{waudby2020estimating}, the process $(M_t(\mu^\star))_{t=0}^N$ is a nonnegative martingale starting at one. 
% In words, this is a sequence of random variables which is expected to stay constant given the past. 
Formally, this means that $M_0(\mu^\star) = 1$, $M_t(\mu^\star) \geq 0$, and 
\[
\EE(M_t(\mu^*) \mid X_1, \dots, X_{t-1}) = M_{t-1}(\mu^*) 
\]
for each $t \in [N]$. Importantly for our purposes, nonnegative martingales are unlikely to ever become very large. This fact is known as \emph{Ville's inequality} \cite{ville1939etude,howard_exponential_2018}, which serves as a generalization of Markov's inequality to nonnegative (super)martingales, and can be stated formally as
\begin{equation}
\label{eq:ville}
    \PP\left ( \exists t \in [N] : M_t(\mu^\star) \geq 1/\alpha \right ) \leq \alpha M_0(\mu^\star) = \alpha, 
\end{equation}
where $\alpha \in (0, 1)$, and the equality follows from the fact that $M_0(\mu^\star) = 1$. As alluded to in Section~\ref{section:CS}, $(M_t(\mu^\star))_{t=0}^N$ can be interpreted as the reciprocal of an anytime $p$-value:
\[ \PP\left (\exists t \in [N] : \frac{1}{M_t(\mu^\star)} \leq \alpha \right) \leq \alpha, \]
which matches the probabilistic guarantee in \eqref{eq:anytime_pval}.
As a direct consequence of Ville's inequality, if we define the test $\phi_t^\mu := \1(M_t(\mu) \geq 1/\alpha)$, then 
\[ \PP(\exists t \in [N] : \phi_t^{\mu^\star} = 1) \leq \alpha, \]
and thus $(\phi_t^{\mu})_{t=1}^N$ is a level-$\alpha$ sequential hypothesis test. We can then invert $(\phi_t^{\mu})_{t=1}^N$ and apply Theorem~\ref{theorem:cs_rla} to obtain confidence sequence-based RLAs with risk limit $\alpha$. 

\subsection{Designing Martingales and Tests from Reported Vote Totals}
So far, we have found a process $(M_t(\mu))_{t=0}^N$ that is a nonnegative martingale when $\mu = \mu^\star$, but what happens when $\mu \neq \mu^\star$? This is where the tuning parameters $(\lambda_t)_{t=1}^N$ come into the picture. Recall that an electoral assertion $\Acal$ is certified once $C_t \subseteq \Acal$. Therefore, to audit assertions quickly, we want $C_t$ to be as tight as possible. Since $C_t$ is defined as the set of $\mu \in [0, 1]$ such that $M_t(\mu) < 1/\alpha$, we can make $C_t$ tight by making $M_t(\mu)$ as \emph{large} as possible. To do so, we must carefully choose $(\lambda_t)_{t=1}^N$. This choice will depend on the type of election as well as the amount of information provided prior to the audit. First consider the case where reported vote totals are given (in addition to the announced winner).

\begin{figure}[h!]
    \centering
    \includegraphics[width=0.45\textwidth]{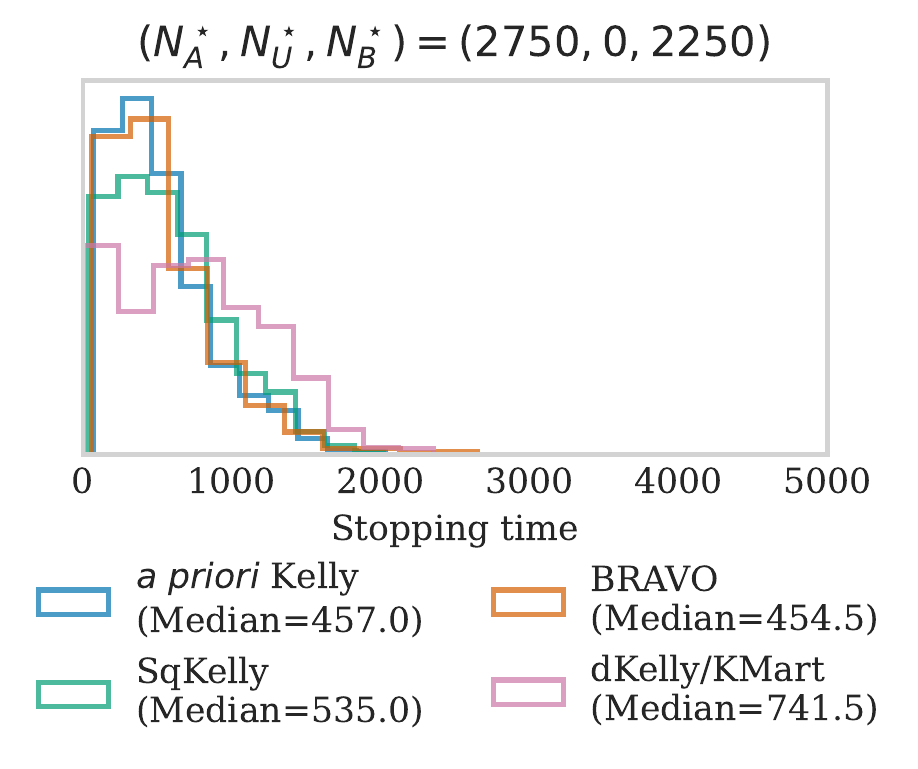}
    \centering 
    \includegraphics[width=0.45\textwidth]{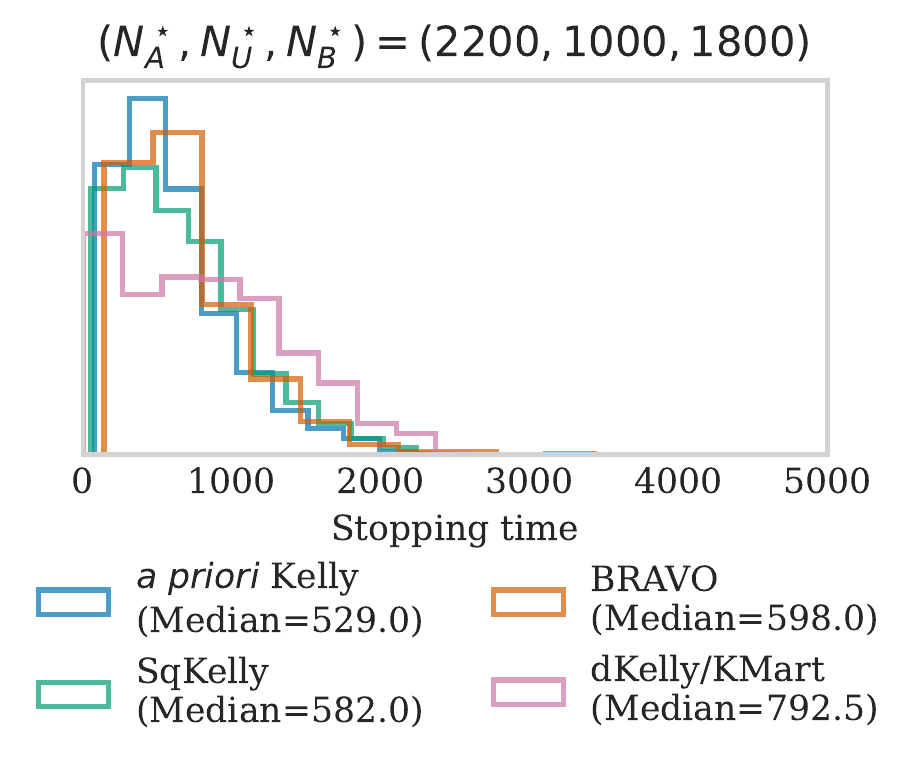}
    \centering 
    \includegraphics[width=0.45\textwidth]{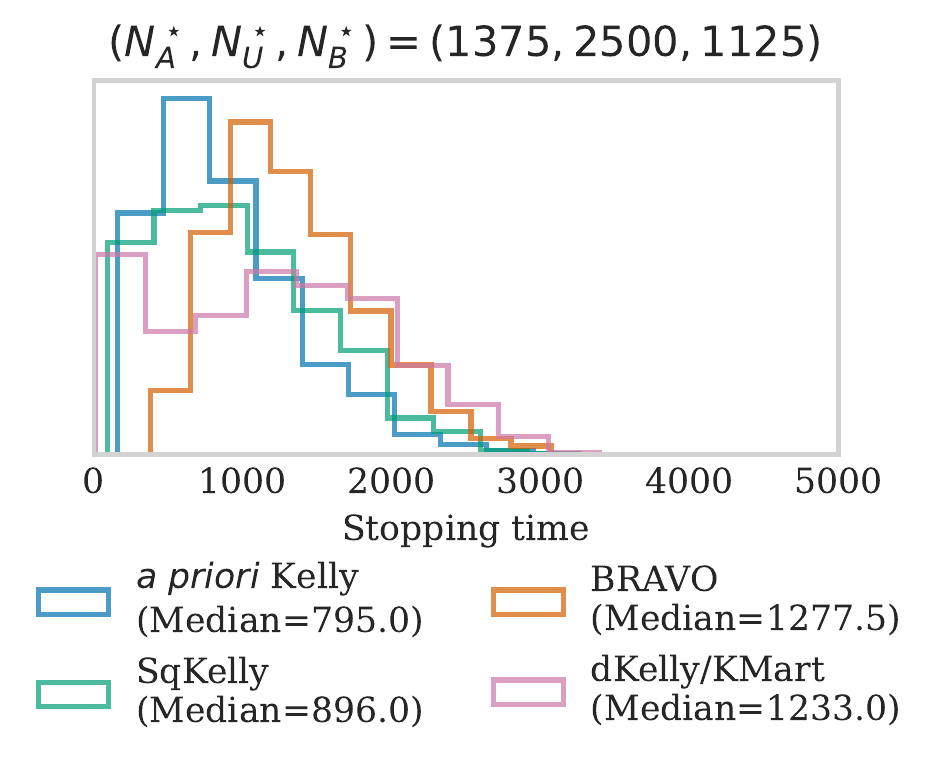}
    \centering
    \includegraphics[width=0.45\textwidth]{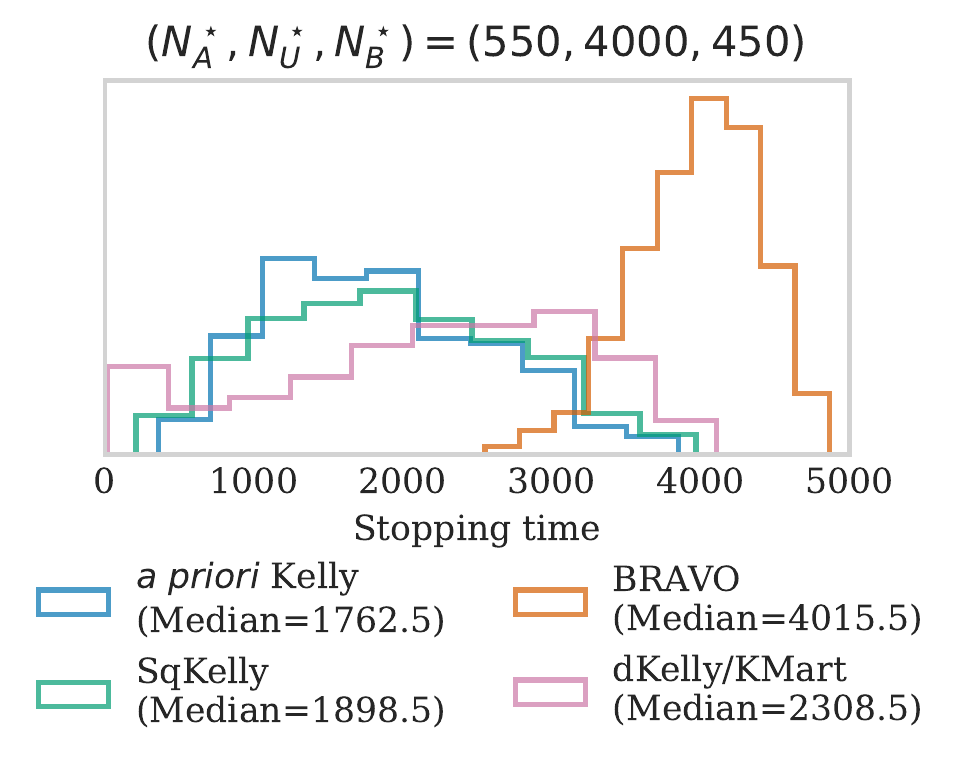}
    \caption{Ballot-polling audit workload distributions under four possible outcomes of a two-candidate plurality election. Workload is defined as the number of distinct ballots examined before completing the audit. The first example considers an outcome where Alice and Bob received 2750 and 2250 votes respectively, and no ballots were invalid, for a margin of $0.1$. The second, third, and fourth examples have the same margin, but with increasing numbers of invalid or ``nuisance'' ballots represented by $N_U^\star$. Notice that in the case with no nuisance ballots, \textit{a priori} Kelly and BRAVO have an edge, while in the setting with many nuisance ballots, \textit{a priori} Kelly vastly outperforms BRAVO. On the other hand, neither SqKelly nor dKelly require tuning based on the reported outcomes, but SqKelly outperforms dKelly in all four scenarios.}
    \label{fig:stoppingTimes}
\end{figure}

For example, recall the election between Alice and Bob of Section~\ref{section:CS}, and suppose that $\{x_1, \dots, x_N\}$ is the list of numbers encoding votes for Alice as 1, votes for Bob as 0, and ballots with no valid vote for either candidate as 1/2. 
Recall that Alice beat Bob if and only if $\mu^\star := \frac{1}{N} \sum_{i=1}^N x_i > 1/2$, so we are interested in testing the null hypothesis $H_0: \mu^\star \leq 1/2
\text{ 
against the alternative }
 H_1: \mu^\star > 1/2$.
Suppose it is reported that Alice beat Bob with $N_A'$ votes for Alice, $N_B'$ for Bob, and $N_U'$ nuisance votes (i.e. either invalid or for another party). If the reported outcome is \emph{correct}, then for any fixed $\lambda$, we know the exact value of 
\begin{equation}
\label{eq:finalWealth}
    \prod_{i=1}^N (1 + \lambda (x_i - 1/2) ),
\end{equation} 
which is an inexact but reasonable proxy for $M_N(1/2)$, the final value of the process $(M_t(1/2))_{t=0}^N$.
We can then choose the value of $\lambda'$ that maximizes \eqref{eq:finalWealth}. Some algebra (which we defer to Section~\ref{section:maximizeFinalWealth}) reveals that the maximizer of \eqref{eq:finalWealth} is given by
\begin{equation}
\label{eq:optimalLambda}
    \lambda' := 2\frac{N_A' - N_B'}{N_A' + N_B'}. 
\end{equation} 
We then truncate $\lambda'$ at each time step $t$ to obtain
\begin{equation}
\label{eq:apK}
     \lambda_t^\mathrm{apK} := \min \left \{ \lambda', \frac{1}{\cond_t(\mu)} \right \},
\end{equation}
ensuring that it lies in the allowable range $[0, 1/\cond_t(\mu)]$. We call this choice of $\lambda_t^\mathrm{apK}$ \textbf{\textit{a priori} Kelly} due to its connections to Kelly's criterion \cite{kelly1956new,waudby2020estimating} for maximizing products of the form \eqref{eq:finalWealth}. This choice of $\lambda_t^\apK$ also has the desirable property of yielding convex confidence sequences, which we summarize below.
\begin{proposition}
    \label{proposition:convex}
    Let $X_1, \dots, X_N$ be a sequential random sample from $\{x_1, \dots, x_N\}$ with $\mu^\star := \frac{1}{N} \sum_{i=1}^N x_i$. Consider $(\lambda_t^\apK)_{t=1}^N$ from \eqref{eq:apK} and define the process $M_t(\mu) := \prod_{i=1}^t (1 + \lambda_i^\apK(X_i - \cond_i(\mu)))$ for any $\mu \in [0, 1]$. Then the confidence set
    \[ C_t^\apK := \{ \mu \in [0, 1] : M_t(\mu) < 1/\alpha \} \]
    is an interval with probability one.
\end{proposition}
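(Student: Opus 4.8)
The plan is to prove the stronger statement that, for every realization of the sample $X_1,\dots,X_t$, the map $\mu \mapsto M_t(\mu)$ is nonincreasing on $[0,1]$. Once this is established the conclusion is immediate: the sublevel set $\{\mu \in [0,1] : M_t(\mu) < 1/\alpha\}$ of a nonincreasing function is always an interval (of the form $(L_t,1]$, $[L_t,1]$, all of $[0,1]$, or empty). Since monotonicity would hold for every realization, $C_t^\apK$ is an interval surely, hence with probability one. This also matches the picture in the figures, where the apK confidence sequence is reported as a one-sided \emph{lower} bound.

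Because $M_t(\mu)=\prod_{i=1}^t f_i(\mu)$ with $f_i(\mu):=1+\lambda_i^\apK(X_i-\cond_i(\mu))$, and a pointwise product of nonnegative nonincreasing functions is again nonincreasing, it suffices to show that each factor $f_i$ is nonnegative and nonincreasing in $\mu$. The first observation I would record is that $\cond_i(\mu)$ is affine and strictly increasing in $\mu$, with slope $N/(N-i+1)>0$, and that $\lambda'$ is a constant that does not depend on $\mu$; thus the only $\mu$-dependence of $\lambda_i^\apK$ enters through the truncation at $1/\cond_i(\mu)$.

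The core of the argument is a one-factor computation, carried out in the variable $c:=\cond_i(\mu)>0$. The truncation in the definition of $\lambda_t^\apK$ splits the analysis into two regimes. When $c\le 1/\lambda'$ one has $\lambda_i^\apK=\lambda'$ and $f_i=1+\lambda'(X_i-c)$, which is affine with slope $-\lambda'\le 0$ in $c$ and satisfies $f_i\ge \lambda' X_i\ge 0$ because $\lambda' c\le 1$. When $c>1/\lambda'$ one has $\lambda_i^\apK=1/c$ and the factor collapses to $f_i=X_i/c$, which is nonnegative (as $X_i\ge 0$) and nonincreasing in $c$. The two pieces agree at the kink $c=1/\lambda'$ (both equal $\lambda' X_i$), so $f_i$ is continuous and nonincreasing as a function of $c$; composing with the increasing map $\mu\mapsto c$ shows $f_i$ is nonincreasing in $\mu$, completing the reduction.

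The one place I expect to spend genuine effort is the range of $\mu$ for which $\cond_i(\mu)\le 0$, where the truncation $1/\cond_i(\mu)$ is no longer a positive upper bound and the clean two-regime computation above does not directly apply. The remedy I would use is that $\{\mu:\cond_i(\mu)\le 0\}=[0,\sum_{j<i}X_j/N]$ is itself a lower interval---these are exactly the values of $\mu$ that are infeasible given the ballots already drawn---so intersecting with the feasible region $\bigcap_{i\le t}\{\cond_i(\mu)>0\}$ preserves the interval structure of the sublevel set. The ``with probability one'' qualifier then absorbs the remaining measure-zero degeneracies, such as hitting the kink $c=1/\lambda'$ exactly, or the trivial tie case $\lambda'=0$ in which $M_t\equiv 1$. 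Apart from this boundary bookkeeping, the proof is the routine monotonicity-of-products observation and requires no concentration or martingale input beyond the definitions.
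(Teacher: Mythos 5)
Your proof is correct and takes essentially the same route as the paper: both arguments show that each factor $1 + \lambda_i^\apK(X_i - \cond_i(\mu))$ is nonnegative and nonincreasing in $\mu$ (the paper via the identity $\lambda_i^\apK(X_i - \cond_i(\mu)) = \min\{\lambda' X_i, X_i/\cond_i(\mu)\} - \min\{\lambda'\cond_i(\mu),1\}$, you via the equivalent two-regime case split at $c = 1/\lambda'$), and then conclude that the product is nonincreasing, hence quasiconvex, hence has interval sublevel sets. Your extra paragraph on the region where $\cond_i(\mu)\le 0$ addresses a boundary case the paper's proof simply assumes away by asserting $\cond_t(\mu)\ge 0$, so it is a welcome (if slightly informal) addition rather than a deviation.
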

\begin{proof}
    Notice that since $\lambda' \geq 0,\ \cond_t(\mu) \geq 0,$ and $X_i \geq 0$, we have that
    \[\lambda_t^\apK(X_i - \cond_t(\mu)) = \min\{\lambda'X_i, X_i/\cond_t(\mu) \} - \min\{\lambda' \cond_t(\mu), 1\} \]
    is a nonincreasing function of $\mu$ for each $t \in [N]$. Consequently, $M_t(\mu)$ is a nonincreasing and quasiconvex function of $\mu$, so its sublevel sets are convex.
\end{proof}
% In the following algorithm, we summarize how to use the aforementioned tools to derive powerful confidence sequences for RLAs. 
% \begin{algorithm}{``A priori Kelly betting'' RLA for two-candidate elections}
% \begin{algorithmic}[] % The number tells where the line numbering should start
%     \Input Vote totals $N_A$ and $N_B$, risk limit $\alpha \in (0, 1)$.
%     \For{$t \in [N]$}
%         \State $\lambda' = 2(N_A - N_B)/(N_A + N_B)$
%             \State $\lambda_t := \min\{ \lambda', 1/\cond_t(\mu) \}$
%             \State $M_t(\mu) \gets \prod_{i=1}^t \left \{1 + \lambda_i(X_i - \cond_i(\mu)) \right \}$
%             \State $\ell \gets \inf \{ \mu : M_t(\mu) < 1/\alpha \}$ \Comment{Lower confidence sequence}
%             \If{$\ell > 1/2$}
%                 \State Certify that $N_A > N_B$.
%             \EndIf
%     \EndFor
% \end{algorithmic}
% \end{algorithm}

Note that \emph{any} sequence $(\lambda_t)_{t=1}^N$ such that $\lambda_t \in [0, 1/\cond_t(\mu)]$ would have yielded a valid nonnegative martingale, but we chose that which maximizes \eqref{eq:finalWealth} so that the resulting hypothesis test $\phi_t := \1 ( M_t(1/2) > 1/\alpha)$ is powerful. In situations more complex than two-candidate plurality contests, the maximizer of \eqref{eq:finalWealth} can still be found efficiently via standard root-finding algorithms. All of these methods are implemented in our Python package.\footnote{\href{https://github.com/wannabesmith/rilacs}{github.com/wannabesmith/RiLACS}}

While audits based on \textit{a priori} Kelly display excellent empirical performance (see Figure~\ref{fig:stoppingTimes}), their efficiency may be hurt when vote totals are erroneously reported. Small errors in reported vote totals seem to have minor adverse effects on stopping times (and in some cases can be slightly beneficial), but larger errors can significantly affect stopping time distributions (see Figure~\ref{fig:misspecified_aPriori_Kelly}). If we wish to audit the reported winner of an election but prefer not to rely on (or do not have access to) exact reported vote totals, we need an alternative to \textit{a priori} Kelly. In the following section, we describe a family of such alternatives. 

\begin{figure}[!htbp]
    \centering
    \includegraphics[width=0.45\textwidth]{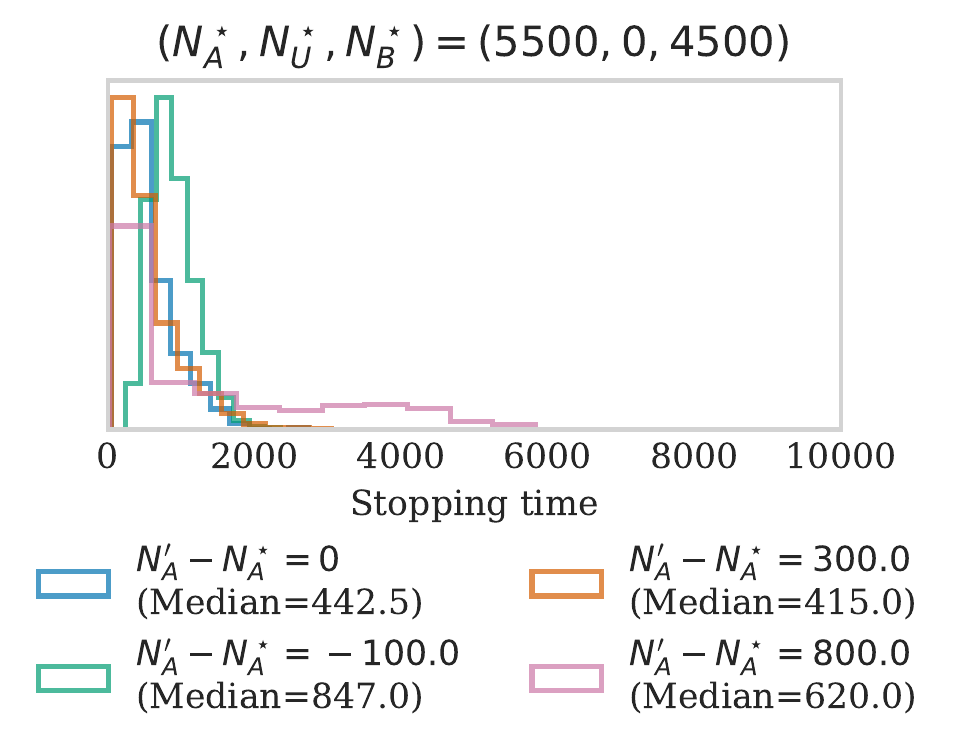}
    \centering 
    \includegraphics[width=0.45\textwidth]{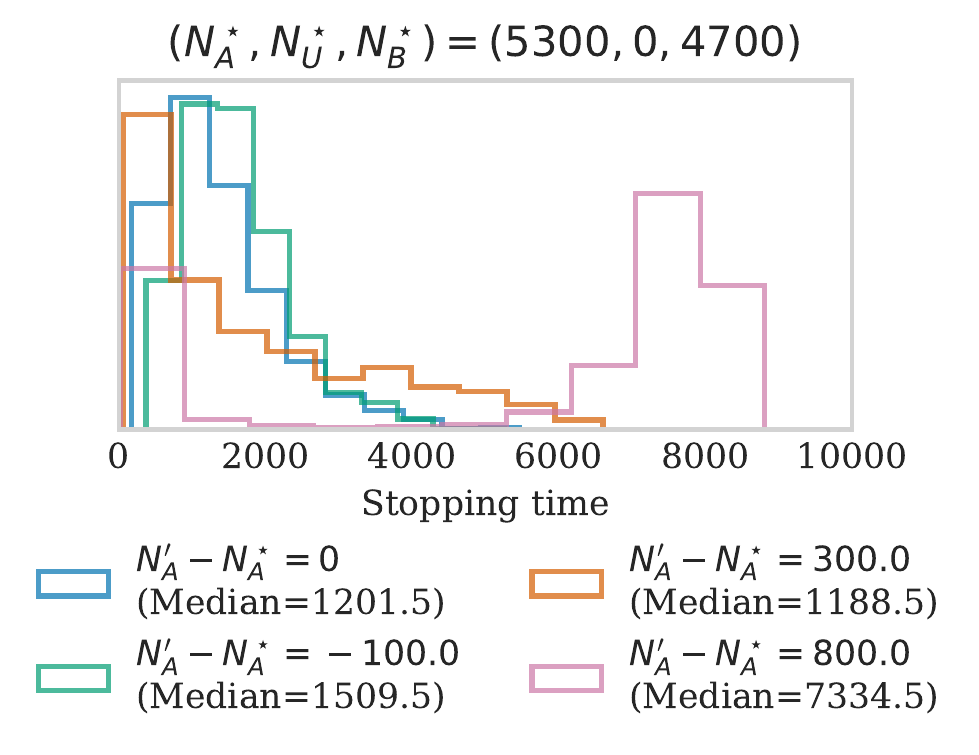}
    \centering
    \caption{Stopping times for \textit{a priori} Kelly under various degrees of error in reported outcomes. In the above legends, $N_A^\star$ refers to the \emph{true} number of votes for Alice, while $N_A'$ refers to the incorrectly reported number of votes. Notice that empirical performance is relatively strong for $N_A' - N_A^\star \in \{0, 300\}$ but is adversely affected when $N_A' - N_A^\star \in \{-100, 800\}$, especially in the right-hand side plot with a narrower margin.}
    \label{fig:misspecified_aPriori_Kelly}
\end{figure}

\subsection{Designing Martingales and Tests without Vote Totals}
\label{section:designWithoutTotals}
If the exact vote totals are not known, but we still wish to audit an assertion (e.g. that Alice beat Bob), we need to design a slightly different martingale that does not depend on maximizing \eqref{eq:finalWealth} directly. Instead of finding an optimal $\lambda'$, we will take $D \geq 2$ points evenly-spaced on the allowable range $[0, 1/\cond_t(\mu)]$ and ``hedge our bets'' among all of these. Making this more precise, note that a convex combination of martingales (with respect to the same filtration) is itself a martingale \cite{waudby2020estimating}, and thus for any $(\theta_1, \dots, \theta_D)$ such that $\theta_d \geq 0$ and $\sum_{d=1}^D \theta_d = 1$, we have that
\begin{equation}
\label{eq:diversifiedMartingale}
    M_t^D(\mu^\star) := \sum_{d=1}^D \theta_d \prod_{i=1}^t \left (1 + \frac{d}{(D+1) \cond_i(\mu^\star)} (X_i - \cond_i(\mu^\star)) \right) 
\end{equation} 
forms a nonnegative martingale starting at one. Notice that we no longer have to depend on the reported vote totals to begin an audit. Furthermore, confidence sequences generated using sublevel sets of $M_t^D(\mu)$ are intervals with probability one \cite[Proposition 4]{waudby2020estimating}. Nevertheless, choosing $(\theta_1, \dots, \theta_D)$ is a nontrivial task. A natural --- but as we will see, suboptimal --- choice is to set $\theta_d = 1/D$ for each $d \in [D]$. Previous works \cite{waudby2020estimating} call this \textbf{dKelly} (for ``diversified Kelly''), a name we adopt here. In fact, this choice of $(\theta_1, \dots, \theta_D)$ gives an arbitrarily close and computationally efficient approximation to the \emph{Kaplan martingale} (\textbf{KMart}) \cite{stark2019sets} which can otherwise be prohibitively expensive to compute for large $N$. 

\begin{figure}[!htbp]
    \centering
    \includegraphics[width=0.6\textwidth]{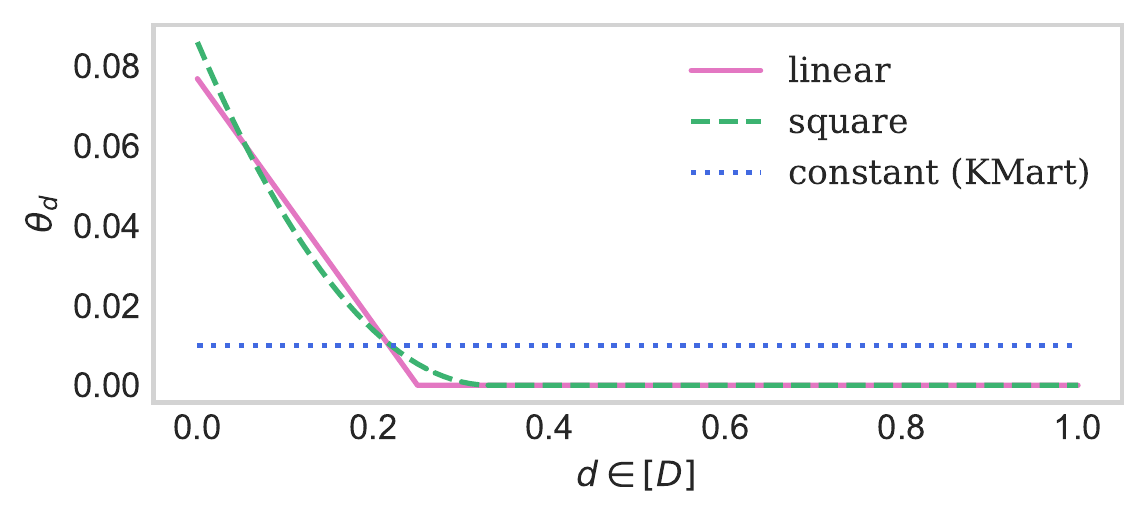}
    \caption{Various values of the convex weights $(\theta_1, \dots, \theta_D)$, which can be used in the construction of the diversified martingale \eqref{eq:diversifiedMartingale}. Notice that the linear and square weights are largest for $d$ near 0, and decrease as $d$ approaches $1/4$, finally remaining at 0 for all large $d$. Smaller values of $d$ are upweighted since they correspond to those values of $\lambda$ in $M_t^D(\mu^\star)$ that are optimal for smaller (i.e. interesting) electoral margins. This is in contrast to the constant weight function, which sets $\theta_d = 1/D$ for each $d \in [D]$. We find that square weights perform well in practice (see Figure~\ref{fig:stoppingTimes}) but these can be tuned and tailored based on prior knowledge and the particular problem at hand.}
    \label{fig:distKelly_distributions}
\end{figure}

Better choices of $(\theta_d)_{d=1}^D$ exist for the types of elections one might encounter in practice. Recall that near-optimal values of $\lambda$ are given by \eqref{eq:optimalLambda}. However, setting $\theta_d = 1/D$ for each $d \in [D]$ implicitly treats all $d / ((D+1) \cond_i(\mu^\star))$ as equally reasonable values of $\lambda$. Elections with large values of $\mu^\star$ (e.g. closer to 1) are ``easier'' to audit, and the interesting or ``difficult'' regime is when $\mu^\star$ is close to (but strictly larger than) 1/2. Therefore, we recommend designing $(\theta_1, \dots, \theta_D)$ so that $(M_t^D(1/2))_{t=0}^N$ upweights optimal values of $\lambda$ for margins close to 0, and downweights those for margins close to 1. Consider the following concrete examples. First, we have the truncated-square weights, 
\[ \theta_d^\mathrm{square} := \frac{\gamma_d^\mathrm{square}}{\sum_{d=1}^D \gamma^\mathrm{square}_d},~~~ \text{where } \gamma_d^\mathrm{square} := (1/3 - x)^2\1_{d \leq 1/3}. \]
and we normalize by $\sum_d \gamma_d^\mathrm{linear}$ to ensure that $\sum_d \theta_d = 1$.
Another sensible choice is given by the truncated-linear weights, where we simply replace $\gamma_d^\mathrm{square}$ by $\gamma_d^\mathrm{linear} := \max\{0, 1-2d\}$.
These values of $\theta_d^\mathrm{linear}$ and $\theta_d^\mathrm{square}$ are large for $d \approx 0$ and small for $d \gg 0$, and hence the summands in the martingale given by \eqref{eq:diversifiedMartingale} are upweighted for implicit values of $\lambda$ which are optimal for ``interesting'' margins close to 0, and downweighted for simple margins much larger than 0 (see Figure~\ref{fig:distKelly_distributions}). 

When $M_t^D$ is combined with $\theta_d^\mathrm{square}$, we refer to the resulting martingales and confidence sequences as \textbf{SqKelly}. We compare their empirical workload against that of \textit{a priori} Kelly, dKelly, and BRAVO in Figure~\ref{fig:stoppingTimes}. A hybrid approach is also possible: suppose we want to use reported outcomes or prior knowledge alongside these convex-weighted martingales. We can simply choose $(\theta_1, \dots, \theta_D)$ so that $M_t^D$ upweights values in a neighborhood of $\lambda'$ (or some other value chosen based on prior knowledge\footnote{The use of the word ``prior'' here should not be interpreted in a Bayesian sense. No matter what values of $(\theta_1, \dots, \theta_D)$ are chosen, the resulting tests and confidence sequences have \emph{frequentist} risk-limiting guarantees.}).

\section{Illustration: Auditing Canada's 43rd Federal Election}
\label{section:canada}
We now apply the techniques derived in Section~\ref{section:design} to risk-limiting audits of the 2019 Canadian federal election, which is made up of many plurality contests between 6 major political parties.\footnote{While Canada has \href{https://www.elections.ca/content.aspx?section=pol&dir=par&document=index&lang=e}{many registered political parties}, only a handful have come close to winning seats in the house of commons, and hence should be considered in an audit. As a somewhat arbitrary rule, we considered those parties which satisfied the Leaders' Debates Commission's \href{https://debates-debats.ca/en/participation-criteria/interpretation-participation-criteria-leaders-debates/}{2019 participation criteria}.} These consisted of The Liberal Party of Canada, The Progressive Conservative Party of Canada (PC), The New Democratic Party (NDP), The Green Party, The Bloc Québécois (Bloc), and the People's Party of Canada (PPC). Independent candidates were also included where appropriate.
\begin{figure}[!htbp]
    \centering
    \includegraphics[width=0.6\textwidth]{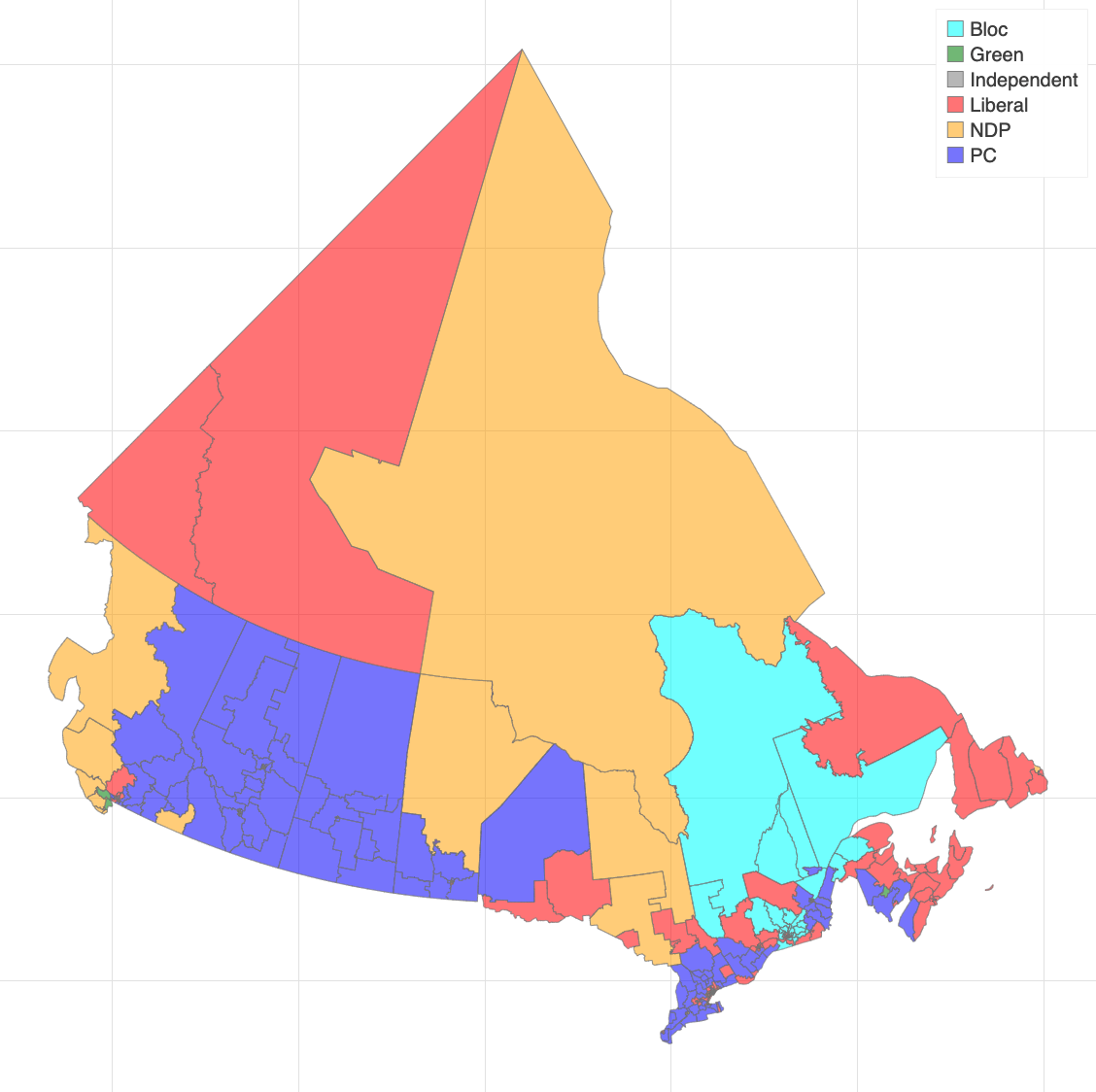}
    \caption{A map of Canada's 338 ridings, each representing one seat in the house of commons. Ridings are colored according to which party received the greatest number of votes in the 2019 federal election. The PPC is omitted from the legend here as they did not win any seats.}
    \label{fig:map}
\end{figure}
The country is made up of 338 so-called ``ridings'' (see Figure~\ref{fig:map}). These are geographic regions, each corresponding to one seat in the house of commons. For each riding, a multi-party, single-winner plurality contest takes place where the winner is awarded the respective seat. Generally speaking, the party with the greatest number of seats forms government (there are exceptions to this rule\footnote{\href{https://www.elections.ca/content.aspx?section=res&dir=ces&document=part1&lang=e}{www.elections.ca/content.aspx?section=res\&dir=ces\&document=part1\&lang=e}} but these will not be important for the purposes of auditing). In US elections, states and electoral college votes play similar roles to ridings and seats, respectively. Since each riding's underlying contest takes the form of a multi-party, single-winner plurality election, we can simply apply the techniques for auditing multiple contests outlined in Section~\ref{section:multipleContests} alongside the martingales and confidence sequences developed in Section~\ref{section:design}.

\paragraph{The data-driven web application}
We designed and developed an interactive Python- and Bokeh-based \cite{bokeh} web application where users can display audits of any Canadian riding in a single click. This combined two data sources: one for electoral outcomes as recorded by hand-counted paper ballots in the 2019 federal election \cite{elections_act_2019,canadaVotingResultsRaw}, and one to draw the map of electoral districts \cite{canadaElectoralDistricts}. After cleaning and merging, the data consisted of 347 records. Each record consists of a geographic information systems (GIS) polygon to draw the riding, vote totals for each party, and other information. The additional 9 records correspond to islands which are not separate ridings but require their own GIS polygon to be drawn on a map.

\begin{figure}[!htbp]
    \centering
    \includegraphics[width=\textwidth]{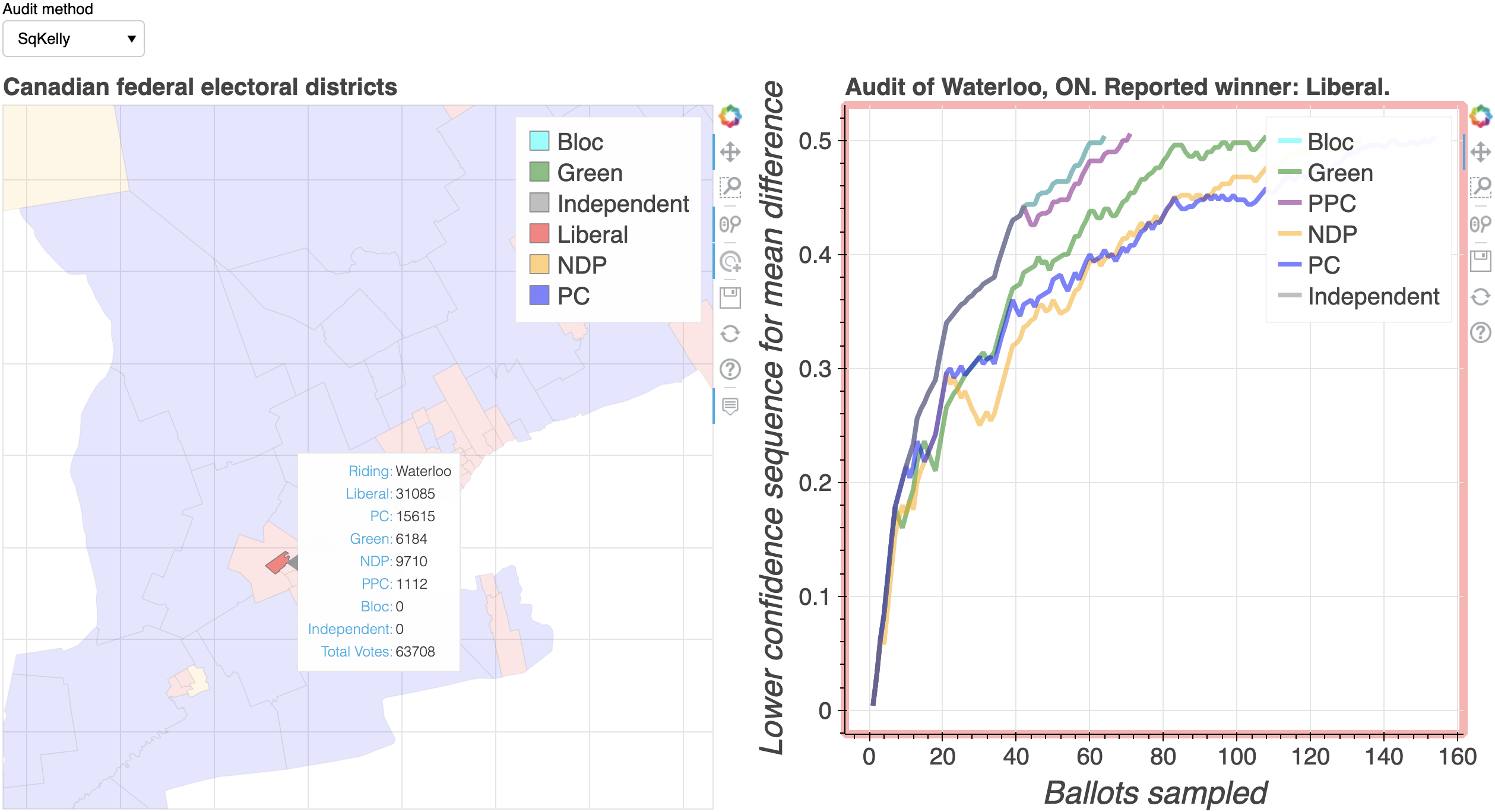}
    \caption{Example risk-limiting audit for the riding of Waterloo, Ontario using SqKelly. This screenshot was captured after zooming the map of Figure~\ref{fig:map} in on southern Ontario. In this example, it was (correctly) reported that the Liberal party received 31,085 out of 63,708 total votes. Clicking on Waterloo's polygon will begin the audit shown in the right-hand side, which displays six $(1-\alpha)$ lower confidence sequences for the pairwise contests between the Liberal party and each reportedly losing party. The Liberal party's win is certified once each of these confidence sequences exceeds 1/2, which in this case happened after sampling roughly 160 ballots.}
    \label{fig:example_audit}
\end{figure}
    
Following the notation of Section~\ref{section:multipleContests}, recall that the electoral parameter of interest $\mu_{w, \ell}$ is defined as  \[\mu^\star_{w, \ell} := \frac{1}{N} \sum_{i=1}^N x_i^{w, \ell}, \]
where
\begin{itemize}
    \item $x_i^{w, \ell} = 1$ if the $i^\text{th}$ ballot shows a vote for $w$,
    \item $x_i^{w, \ell} = 0$ if the $i^\text{th}$ ballot shows a vote for $\ell$, and
    \item $x_i^{w, \ell} = 1/2$ if the $i^\text{th}$ ballot shows a vote for any other party.
\end{itemize}
Also recall that the reported assertion --- ``$w$ received more votes than $\ell$ for each $\ell \in \mathcal L$'' --- is certified once the $(1-\alpha)$ lower confidence sequences for $\mu^\star_{w, \ell}$ exceed 1/2 for each $\ell \in \mathcal L$. Furthermore, this yields an RLA with risk limit $\alpha$, without needing to perform any multiplicity adjustments for constructing several confidence sequences (see Section~\ref{section:multipleContests} for more details). For example, the right-hand side plot of Figure~\ref{fig:example_audit} displays an RLA with risk-limit $\alpha$ for the assertion ``the Liberal party received the largest number of votes'' by computing six $(1-\alpha)$ lower confidence sequences for $\mu^\star_{w, \ell}$, where $w = \text{Liberal}$, and $\ell \in \{\text{Bloc, Green, PPC, NDP, PC, Independent}\}$.

It is important to keep in mind that electoral outcomes in the underlying data sets correspond to hand-counted paper ballot vote totals \cite{elections_act_2019, canadaVotingResultsRaw}. Therefore, the right-hand side plot in the web application (e.g. Figure~\ref{fig:example_audit}) demonstrates the length of time that an audit would last, given correctly-reported outcomes, \emph{and} assuming that the recorded data match the true votes cast. In practice, our confidence sequences would only rely on an assertion to audit (e.g. ``The Liberal party received the most votes'') and a simple random sample without replacement from the physical stack of ballots cast. Moreover, the web application is easily adapted to this practical scenario, an extension we plan to pursue in future work.

A key feature of this app is its interactivity. Users can hover their cursors over ridings to see reported vote totals, click and drag the map around, zoom in on regions of interest, and so on. When the user has found a riding they wish to audit, they can simply click on that riding's polygon to immediately compute lower confidence sequences and begin the RLA (see Figure~\ref{fig:example_audit}). Server-side computation and client-side updates are fully asynchronous, meaning users can interact with the app while the audit is being conducted, and the audit will not ``lock up''. A demo of these features can be found online\footnote{\href{https://ian.waudbysmith.com/audit\_demo.mov}{ian.waudbysmith.com/audit\_demo.mov}} and the code is available on GitHub.\footnote{\href{https://github.com/WannabeSmith/RiLACS/tree/main/canada_audit}{github.com/WannabeSmith/RiLACS/tree/main/canada\_audit}}

\section{Risk-Limiting Tallies via Confidence Sequences}
\label{section:tallies}

Rather than audit an already-announced electoral outcome, it may be of interest to determine (for the purposes of making a first announcement) the election winner with high probability, without counting all $N$ ballots. 
% Audits check with announced results.
Such procedures are known as risk-limiting tallies (RLTs), which were developed for coercion-resistant, end-to-end verifiable voting schemes \cite{jamroga2019risk}. For example, suppose a voter is being coerced to vote for Bob. If the final vote tally reveals that Bob received few or no votes, then the coercer will suspect that the voter did not comply with instructions. RLTs provide a way to mitigate this issue by providing high-probability guarantees that the reported winner truly won, leaving a large proportion of votes shrouded. In such cases, the voter is guaranteed plausible deniability, as they can claim to the coercer that their ballot is simply among the unrevealed ones.

While the motivations for RLTs are quite different from those for RLAs, the underlying techniques are similar. The same is true for confidence sequence-based RLTs. All  methods introduced in this paper can be applied to RLTs (with the exception of ``\textit{a priori} Kelly'' since it depends on the reported outcome) but with two-sided power. Consider the martingales we discussed in Section~\ref{section:designWithoutTotals},
\begin{equation}
\label{eq:dKellyRLT}
    M_t^D(\mu^\star) := \sum_{d=1}^D \theta_d \prod_{i=1}^t \left (1 + \frac{d}{(D+1) \cond_i(\mu^\star)} (X_i - \cond_i(\mu^\star)) \right),
\end{equation} 
where $(\theta_1, \dots, \theta_D)$ are convex weights. Recall that our confidence sequences at a given time $t$ were defined as those $\mu \in [0, 1]$ for which $M_t^D(\mu) < 1/\alpha$. In other words, a given value $\mu$ is only excluded from the confidence set if $M_t^D(\mu)$ is large. However, notice that $M_t^D(\mu)$ will become large if the conditional mean $\cond_t(\mu^\star) \equiv \EE(X_t \mid X_1, \dots, X_{t-1})$ is larger than the null conditional mean $\cond_t(\mu)$, but the same cannot be said if $\cond_t(\mu^\star) < \cond_t(\mu)$. As a consequence, the resulting confidence sequences are all one-sided \emph{lower} confidence sequences. To ensure that our bounds have non-trivial two-sided power, we can simply combine \eqref{eq:dKellyRLT} with a martingale that also grows when $\cond_t(\mu^\star) < \cond_t(\mu)$.
\begin{figure}[!htbp]
    \centering
    \includegraphics[width=0.7\textwidth]{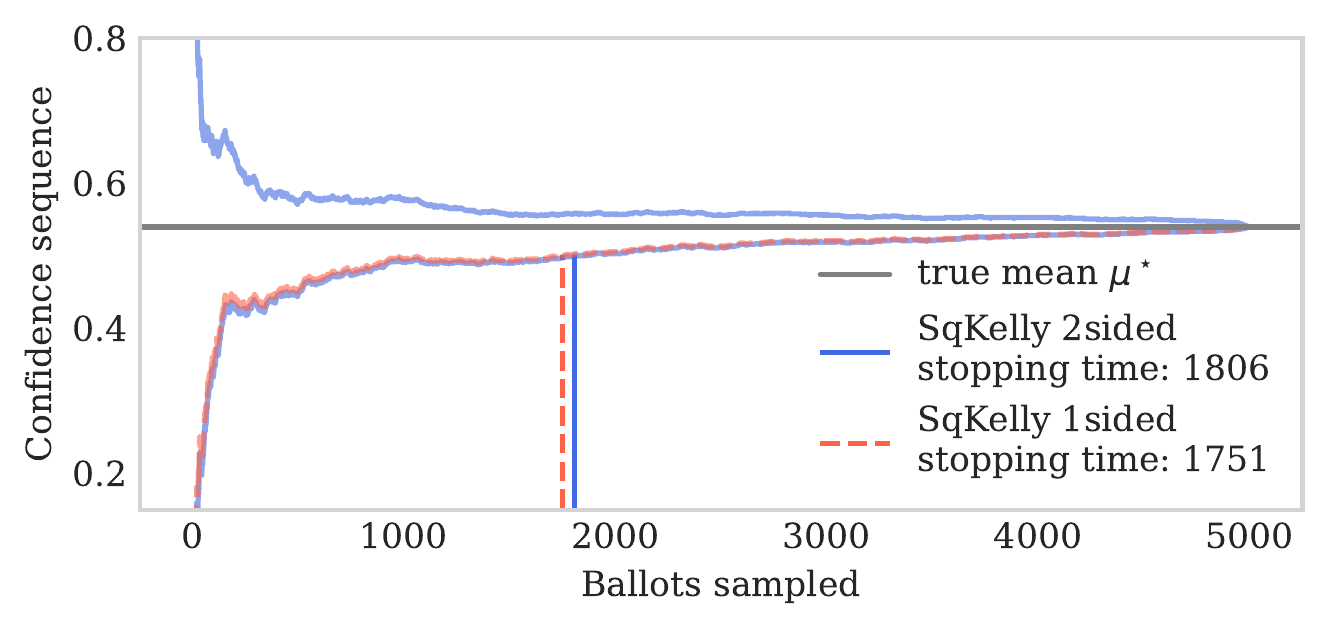}
    \caption{Confidence sequence-based risk-limiting tally for a two-candidate election. Unlike RLAs, RLTs require two-sided confidence sequences so that the true winner can be determined (with high probability) without access to an announced result. Notice that testing the same null $H_0: \mu^\star \leq 0.5$ is less efficient in an RLT than in an RLA. This is a necessary sacrifice for having nontrivial power against other alternatives.}
    \label{fig:RLT}
\end{figure}
\begin{proposition}
For nonnegative vectors $(\theta_1^+, \dots, \theta_D^+)$ and $(\theta_1^-, \dots, \theta_D^-)$ that each sum to one, define the processes
\begin{align*} 
M_t^{D+}(\mu) &:= \sum_{d=1}^D \theta_d^+ \prod_{i=1}^t \left (1 + \frac{d}{(D+1) \cond_i(\mu^\star)} (X_i - \cond_i(\mu^\star)) \right),\\
M_t^{D-}(\mu) &:= \sum_{d=1}^D \theta_d^- \prod_{i=1}^t \left (1 - \frac{d}{(D+1) (1-\cond_i(\mu^\star))} (X_i - \cond_i(\mu^\star)) \right). 
\end{align*}
Next, for $\beta \in [0, 1]$, define their mixture
\[ M_t^{D\pm}(\mu) := \beta M_t^{D+}(\mu) + (1-\beta) M_t^{D-}(\mu). \]
Then, $M_t^{D\pm}(\mu^\star)$ is a nonnegative martingale starting at one. Consequently, 
\[ C_t^\pm := \{\mu \in [0, 1] : M_t^{D\pm} (\mu) < 1/\alpha \} \]
forms a $(1-\alpha)$ confidence sequence for $\mu^\star$.
\end{proposition}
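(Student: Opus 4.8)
The plan is to show that each of the two constituent processes $M_t^{D+}(\mu^\star)$ and $M_t^{D-}(\mu^\star)$ is a nonnegative martingale starting at one, then invoke the fact that a convex combination of martingales (with respect to the same filtration) is again a martingale \cite{waudby2020estimating}, and finally apply Ville's inequality \eqref{eq:ville} to convert the resulting nonnegative martingale into a confidence sequence exactly as in the discussion surrounding Theorem~\ref{theorem:cs_rla}.

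First I would observe that $M_t^{D+}(\mu^\star)$ is literally the diversified martingale \eqref{eq:diversifiedMartingale}, which has already been established to be a nonnegative martingale starting at one, so nothing new is required there. The genuinely new object is $M_t^{D-}(\mu^\star)$, so the bulk of the work is to verify its two defining properties. For the martingale property, I would check that each inner factor has conditional mean one: since $\EE(X_i \mid X_1, \dots, X_{i-1}) = \cond_i(\mu^\star)$ by definition of the conditional mean, we get $\EE\bigl(1 - \tfrac{d}{(D+1)(1-\cond_i(\mu^\star))}(X_i - \cond_i(\mu^\star)) \mid X_1, \dots, X_{i-1}\bigr) = 1$, so each finite product over $i$ is a martingale, and hence so is the convex combination $\sum_d \theta_d^-(\cdots)$ over $d$.

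The main obstacle — and the only step that differs materially from the $M_t^{D+}$ case — is verifying nonnegativity of $M_t^{D-}$. Here I would argue factor-by-factor: because $X_i \in [0,1]$ and $\cond_i(\mu^\star) \in [0,1]$ (the latter being an average of the remaining $[0,1]$-valued ballots), we have $X_i - \cond_i(\mu^\star) \leq 1 - \cond_i(\mu^\star)$, so that $\tfrac{d}{(D+1)(1-\cond_i(\mu^\star))}(X_i - \cond_i(\mu^\star)) \leq \tfrac{d}{D+1} \leq \tfrac{D}{D+1} < 1$, whence each factor $1 - \tfrac{d}{(D+1)(1-\cond_i(\mu^\star))}(X_i - \cond_i(\mu^\star))$ is strictly positive. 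The degenerate case $\cond_i(\mu^\star) = 1$ forces $X_i = 1$ deterministically, making the numerator vanish, so the factor is taken to be one by convention, mirroring the handling of $\cond_i(\mu^\star) = 0$ in $M_t^{D+}$. This is precisely the role of the denominator $1 - \cond_i(\mu^\star)$: it is the mirror image of the $\cond_i(\mu^\star)$ appearing in $M_t^{D+}$, and guarantees positivity while letting the process grow when $\cond_t(\mu^\star) < \cond_t(\mu)$, which is what provides the second (upper) side of power.

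Finally I would combine the pieces. Since $M_t^{D+}(\mu^\star)$ and $M_t^{D-}(\mu^\star)$ are both nonnegative martingales starting at one, their mixture $M_t^{D\pm}(\mu^\star) = \beta M_t^{D+}(\mu^\star) + (1-\beta) M_t^{D-}(\mu^\star)$ is nonnegative, has initial value $\beta + (1-\beta) = 1$, and, being a convex combination, is a martingale. Applying Ville's inequality \eqref{eq:ville} gives $\PP(\exists t \in [N] : M_t^{D\pm}(\mu^\star) \geq 1/\alpha) \leq \alpha$, and the complementary event is exactly $\{\forall t \in [N] : \mu^\star \in C_t^\pm\}$, establishing that $(C_t^\pm)_{t=1}^N$ is a $(1-\alpha)$ confidence sequence. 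I expect this last step to be entirely routine once nonnegativity is in hand.
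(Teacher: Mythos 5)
Your proof is correct and follows essentially the same route as the paper's: both arguments reduce the claim to the fact that $M_t^{D+}(\mu^\star)$ and $M_t^{D-}(\mu^\star)$ are nonnegative martingales with respect to the same filtration and that convex combinations of such martingales are again martingales, followed by Ville's inequality. The paper states this in one line, whereas you additionally verify the details it leaves implicit (the conditional-mean-one property of each factor and the nonnegativity of the $M_t^{D-}$ factors via $X_i - \cond_i(\mu^\star) \leq 1 - \cond_i(\mu^\star)$), all of which checks out.
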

\begin{proof}
This follows immediately from the fact that both $M_t^{D+}(\mu^\star)$ and $M_t^{D-}(\mu^\star)$ are martingales with respect to the same filtration, and that convex combinations of such martingales are also martingales.
\end{proof}
With this setup and notation in mind, $M_t^{D}$ as defined in Section~\ref{section:designWithoutTotals} is a special case of $M_t^{D\pm}$ with $\beta = 1$. As noted by \cite{jamroga2019risk}, RLTs involving multiple assertions \emph{do} require correction for multiple testing, unlike RLAs. The same is true for confidence sequence-based RLTs (and hence the tricks of Section~\ref{section:multipleContests} do not apply). It suffices to perform a simple Bonferroni correction by constructing $(1 - \alpha/K)$ confidence sequences to establish $K$ simultaneous assertions.

\section{Summary}

This paper presented a general framework for conducting risk-limiting audits based on confidence sequences, and derived computationally and statistically efficient martingales for computing them. We showed how \textit{a priori} Kelly takes advantage of the reported vote totals (if available) to stop ballot-polling audits significantly earlier than extant ballot-polling methods, and how alternative martingales such as SqKelly also provide strong empirical performance in the absence of reported outcomes.
%We implemented these methods in a data-driven, interactive web application where users can conduct audits of ridings across Canada.
Finally, we demonstrated how a simple tweak to the aforementioned algorithms provides two-sided confidence sequences, which can be used to perform risk-limiting tallies. 
Confidence sequences and these martingales can be applied to ballot-level comparison audits and batch-level comparison audits as well, using ``overstatement assorters'' \cite{stark2019sets}, which reduce comparison audits to the same canonical statistical problem: testing whether the mean of any list in a collection of non-negative bounded lists is at most 1/2. We hope that this new perspective on RLAs and its associated software will aid in making election audits simpler, faster, and more transparent.

\subsection*{Acknowledgements}
AR acknowledges support from the Block Center for technology and society, and from NSF DMS 1916320. Research reported in this paper was sponsored in part by the DEVCOM Army Research Laboratory under Cooperative Agreement W911NF-17-2-0196 (ARL IoBT CRA). The views and conclusions contained in this document are those of the authors and should not be interpreted as representing the official policies, either expressed or implied, of the Army Research Laboratory or the U.S. Government. The U.S. Government is authorized to reproduce and distribute reprints for Government purposes notwithstanding any copyright notation herein.

\bibliographystyle{unsrtnat}
\bibliography{references}

\appendix
\section{\texorpdfstring{Maximizing a proxy for $M_N(1/2)$}{Maximizing a proxy for the final product}}
\label{section:maximizeFinalWealth}
In Section~\ref{section:design}, equation \eqref{eq:finalWealth}, we considered the product
\begin{equation}
\label{eq:finalWealth2}
    \widetilde M_N^\lambda := \prod_{i=1}^N (1 + \lambda (x_i - 1/2) ),
\end{equation} 
as an (inexact) proxy for $M_N(1/2)$, the final value of the process $(M_t(1/2))_{t=0}^N$. Let us now show that the maximizer of $\widetilde M_N^\lambda$ is given by
    \begin{equation}
    \label{eq:argmaxFinalWealth}
        \lambda' := 2\frac{N_A' - N_B'}{N_A' + N_B'}. 
    \end{equation}
\begin{proof}
To begin, note that the maximizer of $\widetilde M_N^\lambda$ is exactly the maximizer of $\log(\widetilde M_N^\lambda)$ due to the monotinicity of $\log(\cdot)$. Taking the derivative of $\log(\widetilde M_N^\lambda)$ and setting it to zero, we find that $\widetilde M_N^\lambda$ is maximized by the value of $\lambda'$ that solves
\begin{equation}
\label{eq:logWealthMaximizerSolution}
    \sum_{i=1}^N \frac{x_i - 1/2}{1 + \lambda'(x_i - 1/2)} = 0.
\end{equation}
Breaking above sum up into terms for which the ballots are ones, zeros, and halves, respectively, we have that \eqref{eq:logWealthMaximizerSolution} reduces to
\begin{align*}
    0 &= \sum_{i : x_i = 1} \frac{x_i - 1/2}{1 + \lambda'(x_i - 1/2)} + \sum_{i : x_i = 0} \frac{x_i - 1/2}{1 + \lambda'(x_i - 1/2)} + \sum_{i : x_i = 1/2} \frac{x_i - 1/2}{1 + \lambda'(x_i - 1/2)} \\
    &= \sum_{i : x_i = 1} \frac{1/2}{1 + \lambda'/2} + \sum_{i : x_i = 0} \frac{- 1/2}{1 + -\lambda'/2} + \sum_{i : x_i = 1/2} \frac{0}{1 + \lambda' \cdot 0} \\
    &=\ N_A' \frac{1/2}{1 + \lambda'/2} - N_B' \frac{1/2}{1 - \lambda'/2}.
\end{align*} 
Solving the above equation for $\lambda'$ yields the desired result given in \eqref{eq:argmaxFinalWealth}. This completes the proof.
\end{proof}

\end{document}